\definecolor{red}{rgb}{0.7,0.15,0.15}
\definecolor{green}{rgb}{0,0.5,0}
\definecolor{blue}{rgb}{0,0,0.7}
\makeatletter \@addtoreset{equation}{section}
\newtheorem{theorem}{Theorem}[section]
\newtheorem{assumption}[theorem]{Assumption}
\newtheorem{proposition}[theorem]{Proposition}
\newtheorem{definition}[theorem]{Definition}
\newtheorem{remark}[theorem]{Remark}
\def \F{\mathbb{F}}
\def \N{\mathbb{N}}
\def \P{\mathbb{P}}
\def \R{\mathbb{R}}
\def\Gc{{\cal G}}
\def\Zc{{\cal Z}}
\def\d{\mathrm{d}}
\title{Governmental incentives for green bonds investment}
\author{Bastien {\sc Baldacci}\footnote{\'Ecole Polytechnique, CMAP, 91128, Palaiseau, France, bastien.baldacci@polytechnique.edu.} \and Dylan {\sc Possamaï}\footnote{ETH Z\"urich, Department of Mathematics, R\"amistrasse 101, 8092 Z\"urich, Switzerland, dylan.possamai@math.ethz.ch.} }
\begin{document}

\maketitle
\begin{abstract}
Motivated by the recent studies on the green bond market, we build a model in which an investor trades on a portfolio of green and conventional bonds, both issued by the same governmental entity. The government provides incentives to the bondholder in order to increase the amount invested in green bonds. These incentives are, optimally, indexed on the prices of the bonds, their quadratic variation and covariation. We show numerically on a set of French governmental bonds that our methodology outperforms the current tax-incentives systems in terms of green investments. Moreover, it is robust to model specification for bond prices and can be applied to a large portfolio of bonds using classical optimisation methods.

\medskip
\noindent{\bf Keywords:} green bonds, moral hazard, incentives, regulation. 

\end{abstract}

\section{Introduction}\label{Section Introduction}

Green bonds are fixed income products, issued by governments or companies to finance their debt. The only difference with the so-called conventional bonds is that they finance environmental or climate-related activities. Since its inception in 2007, the green bonds market has expanded rapidly to reach a total amount issued of $\$100$ billion in $2019$. Corporate and finance companies issue more than $70\%$ of the total amount of green bonds, whereas governments issue approximately $9\%$ of this total, see for example the report of the \textcolor{green}{Financial Stability Board} in \cite{board2015global} or the \citeauthor*{oecd2017green} reports in \cite{oecd2017green,oecd2017investing}. The role of financial markets in promoting environmental policies via the green bonds is well documented in \citeauthor{park2018investors} \cite{park2018investors}. The characteristics of a bond to be defined as `green' is given by the Green Bond Principles, which are `voluntary process guidelines that recommend transparency and disclosure, and promote integrity in the development of the Green Bond market by clarifying the approach for issuance of a Green Bond', see the definition in the guidelines \cite{international2016green}, published by the \textcolor{green}{ICMA}. These principles led the green bonds to become a standardised asset class, part of the traditional asset allocation. There is an important literature on the influence of green bonds on gas emissions and environmental ratings. In \citeauthor*{flammer2020green} \cite{flammer2018corporate,flammer2020green}, the author shows that the stock of a company responds positively to the announcement of green bond issues, and these issuances lead to an improvement of the environmental performance. The pricing and ownership of green bonds in the United States is studied in \citeauthor*{baker2018financing} \cite{baker2018financing}, where the authors show in particular that green municipal bonds are issued at a premium to otherwise similar ordinary bonds. Similarly, the impact of corporate green bonds on the credit quality of the issuer and on the shareholders is well documented by \citeauthor{tang2020shareholders} \cite{tang2020shareholders}. In \citeauthor*{de2020environmental} \cite{de2020environmental}, the authors show how green investments can help companies to reduce their greenhouse gas emissions by raising their cost of capital. In particular, they provide empirical evidence on the US markets that an increase of assets managed by green investors lead to a decrease of carbon emission by the companies.   

\medskip
The idea of financing renewable projects through green bonds is even more important since institutional investors, in particular pension funds and asset managers, have been considering the possibility of including sustainable environmental investments in their assets. As such, ``sustainable investing'' now accounts for more than one quarter of total assets under management (AUM) in the United States and more than half in Europe, see the report of the \textcolor{green}{GSIA} \cite{alliance2016global} for a detailed survey on the subject. The motivations of sustainable investing can be the search of higher alpha or lower risk (see \citeauthor*{nilsson2008investment} \cite{nilsson2008investment}, \citeauthor*{bauer2015social} \cite{bauer2015social}, \citeauthor*{kruger2015corporate} \cite{kruger2015corporate}), or the will for a more socially responsible image (see \citeauthor*{hong2009price} \cite{hong2009price}). The two major practices in sustainable investing are exclusionary screening and environmental, social and governance (ESG) integration. Exclusionary screening involves the exclusion of certain assets from the range of eligible investments on ethical grounds, such as the so-called sin stocks, while ESG integration involves under weighting assets with low ESG ratings and over weighting those with high ESG ratings. In \citeauthor*{zerbib2019sustainable} \cite{zerbib2019sustainable}, the author builds a sustainable CAPM based on these two principles and shows how sustainable investing affects asset returns. Although the issuance of green corporate bonds has increased over the last years, the public sector accounts for two-thirds of the investments in sustainable energy infrastructure. This pleads in favour of a greater issuance of green bonds by public entities to finance their sustainable projects, which will be the focus of the present paper.

\medskip
However, there are still several barriers to the development of the green bond market, such as a lack of green bond definition, framework, and transparency. In that regard, \citeauthor*{zerbib2017green} \cite{zerbib2017green,zerbib2019effect} investigates the existence of a yield premium for green bonds. The results show that there exists a small negative premium meaning that the yield of a green bond is lower than that of a conventional bond. In the existing literature, this negative yield differential is mainly attributed to intangible asset creation, which is imperfectly captured in the models of rating agencies, see for example \citeauthor*{porter1995toward} \cite{porter1995toward}, \citeauthor*{ambec2008does} \cite{ambec2008does}, or \citeauthor*{brooks2018effects} \cite{brooks2018effects}. The price difference between a green and a conventional bond is studied in \citeauthor{hachenberg2018green} \cite{hachenberg2018green}, where the authors show that financial and corporate green bonds trade tighter than their conventional counterpart, and governmental bonds on the other hand trade marginally wider. Finally, \citeauthor*{ekeland2019obligation} \cite{ekeland2019obligation} relativize the use of green bonds to finance the ecological transition. As the green bond principles are by no means legally mandatory, and the investors are not necessarily motivated by the green transition, there are no intrinsic difference between a green bond and its conventional counterpart. An important aspect in order to avoid green-washing, that is when the investors use the funding obtained with the green bonds to finance non-sustainable projects, through green bonds is the issuer’s reputation or green third-party verifications, as stated in \citeauthor*{bachelet2019green} \cite{bachelet2019green}. These studies show the several components which slow down the development of the green bonds market. It is therefore important to put in place practical solutions to overcome these constraints. Some mechanisms are already developed by the policy makers to facilitate the investment in this market.

\medskip
Indeed, there are several types of incentives policy-makers can put in place to support green bond issuance, see \citeauthor*{morel2012financing} \cite{morel2012financing}, and \citeauthor*{della2011role} \cite{della2011role}: support for research and development (R\&D), investment incentives (capital grants, loan guarantees and low-interest rate loans), policies which target the cost of investment in capital by hedging or mitigating risk, and tax incentives policies.\footnote{For a complete survey of renewable energy promotion policies, we refer to Table 3 in \citeauthor*{della2011role} \cite{della2011role}.} In particular, tax incentives are attractive from a cost-efficiency perspective, as they can provide a big boost to investment with a relatively low impact on public finances. In \citeauthor{agliardi2019financing} \cite{agliardi2019financing}, the authors show that governmental tax-based incentives play a significant role in scaling up the green bonds market. Finally, tax incentives (accelerated depreciation, tax credits, tax exemptions and rebates) can be provided either to the investor or to the issuer under the following forms.\footnote{The data provided below can be found at \url{https://www.climatebonds.net/policy/policy-areas/tax-incentives}.}

\begin{itemize}
    \item  Tax credit bonds: Bond investors receive tax credits instead of interest payments, so issuers do not pay coupon interests. Instead, they quarterly accrue phantom taxable income and tax credit equal to the amount of phantom income to holders, see \citeauthor{klein2009tax} \cite{klein2009tax}.
    \item Direct subsidy bonds: Bond issuers receive cash rebates from the government to subsidise their net interest payments. This type of incentives is mainly used by US municipalities, see for example \citeauthor*{ang2010build} \cite{ang2010build}. 
    \item Tax-exempt bonds: Bond investors do not have to pay income tax on interest from the green bonds they hold (so issuer can get lower interest rate). This type of tax incentive is typically applied to municipal bonds in the US market, see \citeauthor*{calabrese2016borrowing} \cite{calabrese2016borrowing} for a survey of the use of these tax-incentives. 
\end{itemize}
All these incentives can be modelled as a function of the amount invested in green bonds. However, it should be clear that policy-makers cannot necessarily control or monitor directly the actions of the investor. 
This leads for example to the so-called `green-washing' practice, when the investors use the funding obtained with the green bonds to finance non-sustainable projects, see \citeauthor*{della2011role} \cite{della2011role}. Moreover, the incentives are not dynamic in the sense that they do not depend on the evolution of market conditions (for example the price differences between green and conventional bonds). Thus, the incentives mechanism in the green bonds market is subject to a moral hazard component. In this article, we propose an alternative to tax incentives policy which is based on contract theory, and designed so as to increase the investment in green bonds. Moral hazard, whose related theory has been developed since the early $70$'s, occurs when one person or entity (the Agent), is able to make decisions and/or take actions on behalf of, or that impact, another person or entity: the Principal. The classical continuous-time setting works as follow: the Principal hires an Agent to manage a `risky' project, represented as a controlled stochastic differential equation. In exchange for the effort he puts into his work, the Agent receives a salary from the Principal which takes the form of a  `contract'. The Principal's goal is to offer a contract to the Agent allowing him to maximise its utility as a function of the terminal value of the project. The problem is addressed by solving a Stackelberg game, in two stages:
\begin{enumerate}
    \item[$(i)$] With a fixed contract, solve the problem of the Agent and obtain its optimal effort given a contract proposed by the Principal.
    \item [$(ii)$] Inject into the problem of the Principal the effort of better response of the Agent previously found, and solve the Principal's problem, providing the optimal contract offered to the Agent.
\end{enumerate}
Our goal is to propose a dynamic incentives model based on the prices and returns of green and conventional bonds issued by a government. We build a Principal-Agent model in which an investor (the Agent) runs a portfolio of green and conventional bonds. Without intervention of the government, the Agent has specific investment targets coming from his strategy. The policy-maker (the Principal) proposes incentives to the investor in order to achieve two objectives:
\begin{enumerate}
    \item[$(i)$] Increase the amount invested in green bonds according to a determined target;
    \item[$(ii)$] maximise the value of the portfolio of bonds issued by the government.
\end{enumerate}
We show that without loss of utility for the government, we can consider incentives which take the form of stochastic integrals with respect to the portfolio process, the price of the bonds and their quadratic (co)variation. In order to propose tractable incentives for a possibly high number of bonds, we propose a form of contract that is based only on the dynamics of the portfolio process, the green bonds, an index of conventional bonds, and their respective quadratic (co)variations. In the case of deterministic short-term rates for the green and conventional bonds, both the Agent and the Principal's problems can be solved by maximising deterministic functions with classical root-finding methods. When a one factor stochastic volatility model is used for short-term rates, we have to rely on stochastic control theory and determining the incentives of the policy-maker is equivalent to solve a high-dimensional, Hamilton-Jacobi-Bellman equation. 

\medskip
What we propose in this paper is aimed to be used by governments as an alternative to the existing tax incentives, in order to increase the investment in green bonds. We summarise below the key features of our approach.
\begin{itemize}
    \item The methodology we develop is completely tractable from a numerical point of view, thus the incentives can be designed on a large set of bonds.
    \item The remuneration we propose take into account the moral hazard between the investor and the government: the amount invested in the bonds is observed but not controlled by the government. 
    \item The form of the optimal incentives is robust to model error: we show numerically that a more complex dynamics of the short-term rates of the bonds does not lead to an important loss in utility for the government, and causes minor variations in the form of the incentives.
    \item On a one-year horizon, the incentives show a rather constant behaviour. By using this, we show that the optimal incentives can be directly implemented with tradable financial products such as futures, log-contracts and variance swaps on the bonds.
    \item We compare our methodology with the current tax-incentives policy and show that, on a one-year period for a same target in green investments, our incentives policy leads to a higher value of the portfolio of bonds ($15\%$ to $20\%$ on average).
\end{itemize}
In the numerical experiments, we provide general guidelines for the government to calibrate the model parameters, in particular the risk aversions, according to its objectives.

\medskip
This article makes several contributions to the literature. First, to the best of our knowledge, it offers the first Principal-Agent framework to tackle the design of governmental incentives for green bonds. Contrary to articles like \citeauthor*{zerbib2019effect} \cite{zerbib2019effect} and \citeauthor*{febi2018impact} \cite{febi2018impact}, where the authors provide a thorough descriptive analysis of the green bond market (risk premium, liquidity premium, ...) and examine the impact of green investing, our article focuses on answering a practical incentives problem from a quantitative viewpoint. The comparison with existing incentives policy on a set of French governmental bonds shows the benefits of our method for the government. The article contributes also to the Principal-Agent literature with volatility control, of which we give a brief overview.\footnote{This literature has been growing since the study of the well-posedness of second-order backward stochastic differential equations, see for example \citeauthor*{possamai2018stochastic} \cite{possamai2018stochastic}, or \citeauthor*{soner2012wellposedness} \cite{soner2012wellposedness}. A rigorous study of the Principal-Agent problem with volatility control in a general case can be found in \citeauthor*{cvitanic2018dynamic} \cite{cvitanic2018dynamic}.} Contrary to the papers of  \citeauthor*{sung1995linearity} \cite{sung1995linearity}, \citeauthor*{ou2003optimal} \cite{ou2003optimal}, the Principal observes the whole path of the controlled output process. Moreover, in our framework, moral hazard arises from unobservable sources of risk. In \citeauthor*{lioui2013optimal} \cite{lioui2013optimal}, the authors consider a first-best problem with volatility control and assume that the agent has enough bargaining power to make the contract a linear function of the output and a benchmark risk factor. Another model is the one of \citeauthor*{leung2014continuous} \cite{leung2014continuous} where moral hazard with respect to the volatility arises because of the un-observability of the risk factors by the Principal and an exogenous source of risk multiplying the volatility of the Agent. These works are linked to the problem of ambiguity aversion on volatility and drift of the output process, see among others \citeauthor{chen2018managerial} \cite{chen2018managerial}, \citeauthor*{hernandez2019moral} \cite{hernandez2019moral}, \citeauthor*{mastrolia2015moral} \cite{mastrolia2015moral}, 
\citeauthor{sung2015optimal} \cite{sung2015optimal}. There is also a growing literature on the application of Principal-Agent with volatility control to the electricity market, see for example \citeauthor*{elie2019mean} \cite{elie2019mean}, \citeauthor*{aid2018optimal} \cite{aid2018optimal}. Finally, we emphasise that the modelling framework of this article is directly inspired by the one in \citeauthor*{cvitanic2017moral} \cite{cvitanic2017moral}, where the authors consider the problem of delegated portfolio management and identify a family of admissible contracts for which the optimal agent’s action is explicitly characterised. We extend this framework by allowing stochastic drift of the assets held by the Agent, and adapt it to our context.

\medskip
The paper is organised as follows. In \Cref{sec_framework}, we present our framework and modelling assumptions. In \Cref{sec_deterministic_rates}, we solve the problems of the investor and the government with moral hazard and deterministic short rates. We present the numerical results in \Cref{sec_numerical}. Finally, we write in \Cref{sec_weak_formulation} the weak formulation of the control problem, while in \Cref{sec_stoch_rates} we solve the problem in the case of stochastic short rates.

\medskip
\textbf{Notations:} For $(v_1,v_2) \in \mathbb{R}^d$, $v_1\cdot v_2 \in \mathbb{R}$ denote the scalar product between $v_1$ and $v_2$ whereas $v_1 \circ v_2 \in \mathbb{R}^d$ is the component-wise multiplication of the vectors. Let $\N^\star$ be the set of all positive integers. For any $(\ell,c)\in \mathbb{N}^{\star} \times \mathbb{N}^{\star}$, $\mathcal{M}_{\ell,c}(\mathbb{R})$ will denote the space of $\ell\times c$ matrices with real entries. Elements of the matrix $M\in \mathcal{M}_{\ell,c}$ are denoted $(M_{i,j})_{ (i,j)\in\{1,\dots\ell\}\times\{1,\dots c\}}$ and the transpose of $M$ is denoted $M^{\top}$. We identify $\mathcal{M}_{\ell,1}$ with $\mathbb{R}^{\ell}$. When $\ell=c$, we let $\mathcal{M}_{\ell}(\mathbb{R}):=\mathcal{M}_{\ell,\ell}(\mathbb{R})$. For any $x\in \mathcal{M}_{\ell,c}(\mathbb{R})$, and for any $i\in\{1,\dots\ell\}$ and $j\in\{1,\dots,c\}$, $x_{i,:}\in \mathcal{M}_{1,c}(\mathbb{R})$, and $x_{:,j}\in \mathbb{R}^{\ell}$ denote respectively the $i$-th row and the $j$-th column of $M$. For any $d\in\mathbb{N}^\star$, $\mathbb{S}_d$ is the space of $d\times d$-dimensional symmetric matrices. For any $(\ell,c)\in \mathbb{N}^\star\times\mathbb{N}^\star$, we define $\mathrm{I}_{\ell}$ as the identity matrix of $\mathcal{M}_\ell(\mathbb{R})$, and $\mathbf{0}_{\ell,c}$ as a matrix in $\mathcal{M}_{\ell,c}(\mathbb{R})$ with all entries equal to zero. We define the function $\text{diag}:\mathbb{R}^d \longrightarrow \mathcal{M}_d(\mathbb{R})$ such that for $v\in \mathbb{R}^d$, and any $(i,j)\in\{1,\dots,d\}^2$, $\text{\rm diag}(v)_{i,j}:= v_i$ if $i=j$, and $0$ otherwise. For $x\in\mathcal{M}_{\ell,c}(\mathbb{R})$, we define $\|x\|_2 := \sum_{(i,j)\in\{1,\dots,\ell\}\times \{1,\dots,c\}}x_{i,j}^2$.

\section{Framework}\label{sec_framework}
Throughout the article, we work on a filtered probability space $(\Omega,\mathcal{F},\mathbb{P})$ under which all stochastic processes are defined. We refer to \Cref{sec_weak_formulation} for the rigorous weak formulation of the problem, and we intend the present section to have a more accessible (and therefore more heuristic) flavour.

\medskip
We consider an investor wishing to develop his bonds' portfolio. He wants to acquire both green and conventional bonds issued by the same governmental entity or company with possible different amounts issued and different maturities. We assume that we are given a time horizon $T>0$, and positive integers $d^g$ and $d^c$. The investor manages, over the horizon $[0,T]$,  $d^{g}$ green bonds, $d^{c}$ conventional bonds, and an index of conventional bonds of dynamics given by\footnote{We define the index as an average of the dynamics of the conventional bonds. In practice, the investor may trade a large quantity of conventional bonds and only a couple of green bonds. Thus, we argue that it is more convenient for the government to index the remuneration proposed on an average dynamics of conventional bonds in order to have more granularity for the green bonds' incentives.}
\begin{align}\label{dynamics_bonds}
\begin{split}
    & \d P^{g}(t,T^{g}) := P^{g}(t,T^{g})\circ \Big(\big(r^{g}(t) + \eta^{g}(t)\circ \sigma^{g}(t) \big)\mathrm{d}t + \text{diag}\big(\sigma^{g}(t)\big) \d W^{g}_t\Big), \\
    & \d P^{c}(t,T^{c}) := P^{c}(t,T^{c})\circ\Big(\big(r^{c}(t) + \eta^{c}(t)\circ \sigma^{c}(t) \big)\mathrm{d}t + \text{diag}\big(\sigma^{c}(t)\big) \d W^{c}_t\Big), \\
    & \d I_t := I_t\big(\mu^I(t) \mathrm{d}t + \sigma^I(t) \d W_t^I\big).     
\end{split}
\end{align}
In the above equations, $T^g$ is an $\mathbb{R}^{d^g}$-valued vector representing the maturities of each green bond and $T^c$ is a $\mathbb{R}^{d^c}$-valued vector representing the maturities of each conventional bond. The functions $\mu^I:[0,T]\longrightarrow \mathbb{R}$, $\sigma^I:[0,T]\longrightarrow \mathbb{R}$ represent respectively the drift and volatility of the index of conventional of bonds $(I_t)_{t\in [0,T]}$. Similarly, the functions $r^g:[0,T]\longrightarrow \mathbb{R}^{d^g}$, $r^c:[0,T]\longrightarrow \mathbb{R}^{d^c}$ represent the vectors of short-term rate of the green and conventional bonds, and the functions  $\eta^g:[0,T]\longrightarrow \mathbb{R}^{d^g}$, $\eta^c:[0,T]\longrightarrow \mathbb{R}^{d^c}$ represent the vectors of risk premia of the green and conventional bonds, while functions $\sigma^g:[0,T]\longrightarrow \mathbb{R}^{d^g},$ $\sigma^c:[0,T]\longrightarrow \mathbb{R}^{d^c}$ represent the vector of volatilities of the green and conventional bonds. The processes $(W_t^g)_{t\in[0,T]},(W_t^c)_{t\in[0,T]},(W_t^I)_{t\in[0,T]}$ are respectively $\mathbb{R}^{d^g},\mathbb{R}^{d^c}$ and $\mathbb{R}$-valued Brownian motions. Finally
\begin{align*}
    W :=
    \begin{pmatrix}
    W^g \\
    W^c \\
    W^I
\end{pmatrix}
\end{align*}
is an $\mathbb{R}^{d^g+d^c+1}$-valued Brownian motion, whose co-variance structure is given by $\d\langle W \rangle_t = \Sigma\mathrm{d}t$, where 
\begin{align*}
\Sigma\in \mathcal{M}_{d^g+d^c+1}(\mathbb{R}), \; \Sigma := 
\begin{pmatrix}
\Sigma^g & \Sigma^{g,c} & \Sigma^{g,I} \\
\Sigma^{g,c} & \Sigma^c & \Sigma^{c,I} \\
\Sigma^{g,I} & \Sigma^{c,I} & \Sigma^I 
\end{pmatrix},
\end{align*}
with
\begin{align*}
    & \Sigma^g \in \mathcal{M}_{d^g}(\mathbb{R}),\; \Sigma_{i,j}^g :=  \rho_{i,j}^{g} \in [-1,1] \text{ if } i\neq j, \; 1 \text{ otherwise},\; (i,j)\in \{1,\dots,d^g\}^2,  \\
    & \Sigma^c \in \mathcal{M}_{d^c}(\mathbb{R}),\;  \Sigma^c_{i,j} :=  \rho_{i,j}^{c}\in [-1,1]  \text{ if } i\neq j,\; 1 \text{ otherwise},\; (i,j)\in \{1,\dots,d^g\}^2, \\
    & \Sigma^{g,c} \in \mathcal{M}_{d^g,d^c}(\mathbb{R}), \; \Sigma^{g,c}_{i,j} := \rho_{i,j}^{gc}\in [-1,1] ,\; (i,j)\in \{1,\dots,d^g\}\times \{1,\dots,d^c\},  \\
    & \Sigma^{g,I}\in \mathbb{R}^{d^g},\;  \Sigma^{g,I}_i := \rho_{i}^{gI}\in [-1,1] ,\; i\in \{1,\dots,d^g\},   \\
    & \Sigma^{c,I} \in \mathbb{R}^{d^c},\;  \Sigma^{c,I}_i:= \rho_{i}^{cI}\in [-1,1] ,\; i\in \{1,\dots,d^c\}. 
\end{align*}
\begin{remark}
All these quantities are assumed to be deterministic, in order to derive a governmental incentive that is tractable for a large number of bonds. We will show in {\rm \Cref{sec_stoch_rates}} that, at the expense of a higher computational cost and the use of stochastic control theory, one can also derive incentives for the investor when short-term rates are stochastic. 
In {\rm\Cref{sec_numerical}}, we show numerically that the use of stochastic short-term rates for the green bonds does not impact qualitatively our results. In particular, when the short-term rates are driven by Ornstein-Uhlenbeck processes, the optimal investment policy in this case oscillates slightly around the one obtained with deterministic rates. Thus, the methodology we propose appears to be robust to model specification. 
\end{remark}
Throughout the paper, we use the following technical assumption.
\begin{assumption}\label{assumption_bounded}
The functions $r^g$, $r^c$, $\eta^c$, $\eta^g$, $\sigma^g$, $\sigma^c$, $\mu^I$, and $\sigma^I$ are uniformly bounded on $[0,T]$. 
\end{assumption}
The investment policy is defined by a vector of control processes $\pi=(\pi_t^g,\pi_t^c,\pi_t^I)_{t\in [0,T]}\in \mathcal{A}$, representing the amount of money invested at time $t$, where
\begin{align*}
    \mathcal{A} := \Big\{(\pi_t)_{t\in [0,T]}:  K \text{-valued and $\mathbb{F}$-predictable processes} \Big\}.
\end{align*}
is the set of admissible control process, where $K:=[\varepsilon,b_\infty]^{d^g}\times [\varepsilon,b_\infty]^{d^c}\times [\varepsilon,b_\infty],$ for some $0<\varepsilon< b_\infty$\footnote{We force the control processes to be strictly positive so that the density of the canonical process in \Cref{sec_weak_formulation} is invertible and we can define properly the weak formulation of the control problem. Practically, this simply means that the investor ahs to invest in the index, and in at least one of the conventional and one of the green bonds.} and $\mathbb{F}:=(\mathcal{F}_t)_{t\in[0,T]}$ is the natural filtration of the process $(X,W)$ with $X$ defined below. We define the dynamics of the vectors of returns on the bonds as 
\begin{align*}
    & \d R^g(t,T^g) =\big(r^{g}(t) + \eta^{g}(t)\circ \sigma^{g}(t) \big)\mathrm{d}t + \text{diag}\big(\sigma^{g}(t)\big) \d W^{g}_t , \\
    & \d R^c(t,T^c) =\big(r^{c}(t) + \eta^{c}(t)\circ \sigma^{c}(t) \big)\mathrm{d}t + \text{diag}\big(\sigma^{c}(t)\big) \d W^{c}_t ,  \\
    & \d R^I_t =\mu^I(t)\mathrm{d}t + \sigma^I(t) \d W^{I}_t. 
\end{align*}
For every $\pi\in \mathcal{A}$, one can define a probability measure $\mathbb{P}^\pi$\footnote{See \Cref{sec_weak_formulation} for the weak formulation of the control problem, which explains how to construct $\P^\pi$.} such that the dynamics of the value of portfolio of bonds is given by
\begin{align*}
    \d X_t & := \pi_t^g \cdot \d R^g(t,T^g) + \pi_t^c \cdot \d R^c(t,T^c) + \pi_t^I \d R^I_t .
\end{align*}
We also denote by $\mathbb{E}_t^\pi$ the conditional expectation under the probability measure $\mathbb{P}^\pi$ with respect to $\mathcal{F}_t$ for all $t\in[0,T]$. Throughout the investment period $[0,T]$, the investor wants to maintain his investment in bonds at some pre-defined levels, which can be seen as his investment profile. We introduce the vectors $\alpha=(\alpha^g,\alpha^c,\alpha^I)\in \mathbb{R}^{d^g}\times\mathbb{R}^{d^c}\times\mathbb{R}$ and the cost function $k:\mathbb{R}^{d^g}\times\mathbb{R}^{d^c}\times\mathbb{R}\longrightarrow \R$, where for any $p:=(p^g,p^c,p^I)\in \mathbb{R}^{d^g}\times\mathbb{R}^{d^c}\times\mathbb{R}$
\begin{align*}
    k(p) := \frac{1}{2}\beta^g \cdot (p^g - \alpha^g)^2 + \frac{1}{2}\beta^c \cdot (p^c - \alpha^c)^2 + \frac{1}{2}\beta^I (p^I - \alpha^I)^2,
\end{align*}
where $\beta:=(\beta^g,\beta^c,\beta^I)\in \mathbb{R}^{d^g}\times\mathbb{R}^{d^c}\times\mathbb{R}$ are what we coin intensity vectors. For instance, at some time $t\in[0,T]$, the investor pays a cost to move the amount $(\pi_t^{g})_i$ invested in the $i$-th green bond away from the initial target $\alpha^{g}_i$, and this cost is equal to $\frac{1}{2}\beta_i^{g}\big((\pi_t^{g})_i-\alpha_i^{g}\big)^2$. Thus, $(\beta^g,\beta^c,\beta^I)$ represent the cost intensity of changing the investments of the agent: the higher these coefficients, the more incentives the investor will demand to change his investment profile. 

\medskip
In order to modify an investment policy $\pi\in\mathcal{A}$, the government proposes a remuneration to the investor. It takes the form of an  $\mathcal{F}_T$-measurable random variable denoted by $\xi$, and we will see later that the form of remuneration considered is an indexation on the value of the portfolio of bonds as well as the sources of risk of each bond. The optimisation problem of the investor with CARA utility function writes, for a given contract provided by the government, as
\begin{align*}
    V^A(\xi):= \sup_{\pi\in\mathcal{A}}\mathbb{E}^\pi\bigg[U_A\bigg( \xi - \int_0^T k(\pi_s)\d s\bigg)\bigg], \; U_A(x):=-\exp(-\gamma x),
\end{align*}
where $\gamma>0$ is his risk aversion parameter. To ensure that the control problem of the investor is non-degenerate, we impose the following integrability condition on the contracts
\begin{align}\label{non-degeneracy-investor}
    \sup_{\pi\in\mathcal{A}}\mathbb{E}^\pi \Big[\exp(-\gamma' \xi) \Big] <+\infty, \; \text{for some }\gamma' >\gamma. 
\end{align}

\begin{remark}
We emphasise here that the notion of price for a bond is meaningless as it is not quoted on the National Best Bid and Offer $(${\rm NBBO}$)${\rm:} This is an {\rm OTC} market where the liquidity is provided by one or several dealers. In particular, even though there is a quantity defined as the bond price on Bloomberg, it serves only as an indication as the dealers have no obligation to buy or sell at this price. However, especially in the case of treasury bonds, Futures on the bonds are listed on the Chicago Board Of Trade where the notion of price is meaningful. Thus, throughout the article, the notion of bond price must be thought as the price of a future on the considered bond.   
\end{remark}
On the other hand, the government wishes to maximise the portfolio value of the bonds issued while increasing the amount invested in green bonds. Thus, he wants to maximise, on average, the quantity 
\begin{align*}
    X_T -  \sum_{i=1}^{d^g}\int_0^T \kappa\Big(G_i-\big(\hat{\pi}_t^{g}(\xi)\big)_i\Big)^2 \mathrm{d}t,
\end{align*}
where for $i\in\{1,\dots,d^g\}$, $G_i$ is the investment target in the $i$-th green bond of the government entity, $\kappa>0$ is the cost of moving away from the targets $(G_1,\dots,G_{d^g})$ and $\hat \pi(\xi)$ is a best response of the investor to a given contract $\xi$.\footnote{We will see later that there might be several best responses of the Agent. Thus, following the tradition in the moral hazard literature, we assume that the Principal has enough bargaining power to be able to choose the best response of the Agent that maximises his own utility.} We assume that the cost of moving away from the targets is the same for each green bond, meaning that the government does not have different preferences for each bond (this assumption can of course be relaxed). The government also subtracts from this quantity the contract $\xi$ offered to the investor. Thus, his optimisation problem with CARA utility function writes
\begin{align}\label{pb_principal}
    V_0^P = \sup_{\xi\in \mathcal{C}
    }\sup_{\hat\pi \in \mathcal{A}(\xi)}\mathbb{E}^{\hat \pi}\bigg[U_P\bigg(X_T -  \sum_{i=1}^{d^g}\int_0^T \kappa\Big(G_i-\big(\hat \pi_t^{g}(\xi)\big)_i\Big)^2 \mathrm{d}t - \xi\bigg)\bigg],\; U_P(x)=-\exp(-\nu x),
\end{align}
where $\nu>0$ is the risk aversion parameter of the Principal, 
\begin{align*}
    \mathcal{A}(\xi):= \bigg\{\hat\pi\in\mathcal{A}:  V^A(\xi)= \mathbb{E}^{\hat\pi}\bigg[-\exp\bigg(-\gamma \bigg( \xi - \int_0^T k(\hat\pi_s)\d s\bigg)\bigg)\bigg] \bigg\},
\end{align*}
is the set of best-responses of the Agent to a given contract $\xi$ and 
\begin{align*}
    \mathcal{C}=\big\{\xi: \mathbb{R}\text{-valued, }\mathcal{F}_T\text{-measurable random variable such that } V^A(\xi) \geq R, \; \text{and} \; \eqref{non-degeneracy-investor} \text{ is satisfied}\big\},
\end{align*}
is the set of admissible contracts for the government, where $R<0$ is the reservation utility of the investor:  He will not accept to work for Principal (and accept the contract $\xi$) unless the contract is such that his expected utility is above $R$.

\begin{remark}
We consider here that the reservation utility corresponds to the utility function of the investor in the case $\xi=0$, that is
\begin{align*}
    R = V^A(0) = \sup_{\pi\in\mathcal{A}}\mathbb{E}^\pi\bigg[-\exp\bigg(\gamma \int_0^T k(\pi_s)\d s\bigg)\bigg] = -1,
\end{align*}
where the supremum is reached by choosing $\pi=(\alpha^g,\alpha^c,\alpha^I)$. We will see in the following section that the optimal contract proposed by the government will always saturate this constraint, that is the Principal will provide the Agent with the minimum reservation utility $R$ he requires.
\end{remark}

\section{Solving the optimisation problem}\label{sec_deterministic_rates}

\subsection{The optimal contract}

In this section, we derive the optimal governmental incentives proposed to the investor. As it would be unrealistic (and hardly tractable) to offer a compensation based on the whole universe of governmental bonds, we suggest a remuneration based on the green bonds, the value of the portfolio and an index of conventional bonds. This way, the contract is only indexed on $d^g+2$ variables. The optimal incentives are obtained by maximising a deterministic function, which makes the problem easily tractable for a large number of green bonds. We begin this section with the definition of contractible and non-contractible variables.  
\begin{definition}
The set of contractible variables is defined as the $\mathbb{R}^{d^g+2}$-valued process 
\begin{align*}
    B^{\text{\rm obs}} :=
    \begin{pmatrix}
     X \\
     W^g \\
     W^I
    \end{pmatrix}.
\end{align*}
The set of non-contractible variables is defined as the $\mathbb{R}^{d^c}$-valued process $B^{\text{\sout{\rm obs}}}:=W^c,$ with the following dynamics 
\begin{align*}
   & \d B_t^{\text{\rm obs}} := \mu^{\text{\rm obs}}(t,\pi_t) \mathrm{d}t + \Sigma^{\text{\rm obs}}(t,\pi_t) \d W_t, \; \d B_t^{\text{\sout{\rm obs}}} := \mu^{\text{\sout{\rm obs}}} \mathrm{d}t+ \Sigma^{\text{\sout{\rm obs}}} \d W_t,
\end{align*}
where $\mu^{\text{\sout{\rm obs}}}:= 
    \begin{pmatrix}
    \mathbf{0}_{d^c,1}
    \end{pmatrix},$ $\Sigma^{\text{\sout{\rm obs}}}:= 
    \begin{pmatrix}
        \mathbf{0}_{d^c,d^g} & \mathrm{I}_{d^c} & \mathbf{0}_{d^c,1}
    \end{pmatrix}$, 
 and the maps $\mu^{\text{\rm obs}}:[0,T]\times \mathbb{R}^{d^g}\times\mathbb{R}^{d^c}\times\mathbb{R}\longrightarrow \mathbb{R}^{d^g+2}$, as well as $\Sigma^{\text{\rm obs}}:[0,T]\times \mathbb{R}^{d^g}\times\mathbb{R}^{d^c}\times\mathbb{R}\longrightarrow \mathcal{M}_{d^g+2,d^g+d^c+1}(\mathbb{R})$ are defined for any $p:=(p^g,p^c,p^I)\in \mathbb{R}^{d^g}\times\mathbb{R}^{d^c}\times\mathbb{R}$ and $t\in[0,T]$ by
\begin{align*}
    & \mu^{\text{\rm obs}}(t,p) := 
    \begin{pmatrix}
        p^g \cdot \big(r^g(t)+\eta^g(t)\circ\sigma^g(t)\big) + p^c \cdot \big(r^c(t)+\eta^c(t)\circ\sigma^c(t)\big) + p^I \mu^I(t) \\
        \mathbf{0}_{d^g,1} \\
        0
    \end{pmatrix},\\ 
    & \Sigma^{\text{\rm obs}}(t,p) :=
    \begin{pmatrix}
        (p^g \circ \sigma(t)^g)^{\top} &(p^c\circ \sigma(t)^c)^{\top} & p^I \sigma^I(t) \\
        I_{d^g} & \mathbf{0}_{d^g,d^c} & \mathbf{0}_{d^g,1} \\
        \mathbf{0}_{1,d^g} & \mathbf{0}_{1,d^c} & 1 
    \end{pmatrix}.
    \end{align*}
\end{definition}
Finding the optimal contract $\xi$ in the optimisation problem \eqref{pb_principal} is an arduous task, as we search a solution in the space of $\mathcal{F}_T$-measurable random variables. However, see \citeauthor*{cvitanic2018dynamic} \cite{cvitanic2018dynamic}, it has been shown that without reducing the utility of the Principal, we can restrict our study to admissible contracts which have a specific form. In order to describe this result, we need first to introduce additional notations.

\medskip
We define the quantities 
\begin{align*}
    B:=
    \begin{pmatrix}
    B^{\text{\rm obs}} \\
    B^{\text{\sout{\rm obs}}}
    \end{pmatrix}
    , \; \mu(t,p) := \begin{pmatrix}
    \mu^{\text{\rm obs}}(t,p) \\
    \mu^{\text{\sout{\rm obs}}}
    \end{pmatrix}, \; 
     \Sigma(t,p) := \begin{pmatrix}
    \Sigma^{\text{\rm obs}}(t,p) \\
    \Sigma^{\text{\sout{\rm obs}}}
    \end{pmatrix},\; (t,p)\in [0,T]\times \mathbb{R}^{d^g}\times\mathbb{R}^{d^c}\times\mathbb{R}.
\end{align*}
We also will need to introduce the map $h:[0,T]\times \mathbb{R}^{d^g+d^c+2}\times \mathbb{S}_{d^g+d^c+2}(\mathbb{R})\times K\longrightarrow \mathbb{R}$, with
\begin{align*}
    h(t,z,g,p) = -k(p) + z \cdot \mu(t,p) +  \frac{1}{2}\text{\rm Tr}\big[ g\Sigma(t,p)\Sigma (\Sigma(t,p)^{\top}], \; (t,z,g,p)\in [0,T]\times \mathbb{R}^{d^g+d^c+2}\times \mathbb{S}_{d^g+d^c+2}(\mathbb{R})\times K.
\end{align*}
and for all $(t,z,g)\in [0,T]\times \mathbb{R}^{d^g+d^c+2}\times\mathbb{S}_{d^g+d^c+2}(\mathbb{R})$,
\begin{align*}
    \mathcal{O}(t,z,g):=\Big\{\hat p\in K:  \hat p \in \underset{p\in K}{\text{argmax}}\big\{h(t,z,g,p)\big\}\Big\}.
\end{align*}
is the set of the maximisers of $h$ with respect to its last variable, for $(t,z,g)$ given. Following \citeauthor{schal1974selection} \cite{schal1974selection}, there exists at least one Borel-measurable map $\hat \pi:[0,T]\times \mathbb{R}^{d^g+d^c+2}\times\mathbb{S}_{d^g+d^c+2}(\mathbb{R})\longrightarrow K$ such that for every $(t,z,g)\in [0,T]\times \mathbb{R}^{d^g+d^c+2}\times\mathbb{S}_{d^g+d^c+2}(\mathbb{R})$, $\hat \pi(t,z,g)\in \mathcal{O}(t,z,g)$. We denote by $\mathcal{O}$ the corresponding set of all such maps. 
\begin{theorem}\label{thm_1_admissible_contract}
Without reducing the utility of the Principal, we can restrict the study of admissible contracts to the set $\mathcal{C}_1$  where any $\xi\in\mathcal{C}_1\subset \mathcal{C}$ is of the form $\xi=Y_T^{y_0,Z,\Gamma,\hat \pi}$ where for $t\in[0,T]$,
\begin{align}\label{opt_contract_1}
    Y_t^{y_0,Z,\Gamma,\hat \pi} := y_0 + \int_0^t Z_s\cdot \d B_s + \frac{1}{2}\int_0^t\text{\rm Tr}\big[(\Gamma_s + \gamma Z_s Z_s^{\top})d\langle B\rangle_s\big] - \int_0^t h\big(s,Z_s,\Gamma_s,\hat \pi(s,Z_s,\Gamma_s)\big)\mathrm{d}s,
\end{align}
where $y_0\in \mathbb{R}$, $\hat\pi\in\mathcal{O}$ and $(Z,\Gamma)$ are respectively $\mathbb{R}^{d^g+d^c+2}$- and $\mathbb{S}_{d^g+d^c+2}(\mathbb{R})$-valued, $\mathbb{F}$-predictable processes such that {\rm Condition} \eqref{non-degeneracy-investor} is satisfied for $Y_T^{y_0,Z,\Gamma,\hat \pi}$, and $V^A(Y_T^{y_0,Z,\Gamma,\hat \pi})\geq U_A(y_0)$. We denote by $\mathcal{ZG}$ the set of such processes, which is properly defined in {\rm\Cref{zg_admissible}}. Moreover, we have
\begin{align*}
    V^A\big(Y_T^{y_0,Z,\Gamma,\hat \pi}\big) = U_A(y_0),\;    \mathcal{A}\big(Y_T^{y_0,Z,\Gamma,\hat \pi}\big)= \Big\{\big(\hat\pi(t,Z_t,\Gamma_t)\big)_{t\in[0,T]}: \hat{\pi}\in\mathcal{O}, (Z,\Gamma)\in\mathcal{ZG}\Big\}.
\end{align*}
\end{theorem}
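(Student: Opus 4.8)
The plan is to prove the statement in two halves: a \emph{verification} half and a \emph{representation} half. The verification half establishes that, for any data $(y_0,Z,\Gamma,\hat\pi)$ with $(Z,\Gamma)\in\mathcal{ZG}$ and $\hat\pi\in\mathcal{O}$, the contract $Y_T^{y_0,Z,\Gamma,\hat\pi}$ gives the Agent value exactly $U_A(y_0)$ with the stated best-response set; the representation half shows that, without loss for the Principal, every $\xi\in\mathcal{C}$ can be recast in the form \eqref{opt_contract_1}, so that $\sup_{\xi\in\mathcal{C}}$ and $\sup_{\xi\in\mathcal{C}_1}$ of the Principal's criterion agree. Throughout I would work in the weak formulation of \Cref{sec_weak_formulation}, where for each $\pi\in\mathcal{A}$ there is a probability $\mathbb{P}^\pi$ with $\d B_t = \mu(t,\pi_t)\,\d t + \Sigma(t,\pi_t)\,\d W_t$ and $\d\langle B\rangle_t = \Sigma(t,\pi_t)\,\Sigma\,\Sigma(t,\pi_t)^\top\,\d t$, the middle $\Sigma$ being the correlation matrix of $W$.

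For the verification half, I would fix $\xi = Y_T^{y_0,Z,\Gamma,\hat\pi}$ and an arbitrary $\pi\in\mathcal{A}$, and set $\mathcal{Y}_t := Y_t^{y_0,Z,\Gamma,\hat\pi} - \int_0^t k(\pi_s)\,\d s$. Substituting the $\mathbb{P}^\pi$-dynamics of $B$ into \eqref{opt_contract_1} and using the definition of $h$, the $\mathbb{P}^\pi$-drift of $\mathcal{Y}$ is $h(t,Z_t,\Gamma_t,\pi_t) - h\big(t,Z_t,\Gamma_t,\hat\pi(t,Z_t,\Gamma_t)\big) + \frac{\gamma}{2}\,Z_t^\top\Sigma(t,\pi_t)\,\Sigma\,\Sigma(t,\pi_t)^\top Z_t$, while $\d\langle\mathcal{Y}\rangle_t = Z_t^\top\Sigma(t,\pi_t)\,\Sigma\,\Sigma(t,\pi_t)^\top Z_t\,\d t$. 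The point of the $\frac{\gamma}{2}Z_tZ_t^\top$ term in \eqref{opt_contract_1} is that, upon applying Itô's formula to $U_A(\mathcal{Y}_t) = -\exp(-\gamma\mathcal{Y}_t)$, it exactly cancels the second-order correction $-\frac{\gamma^2}{2}\d\langle\mathcal{Y}\rangle_t$, leaving the drift of $U_A(\mathcal{Y}_\cdot)$ equal to $\gamma\exp(-\gamma\mathcal{Y}_t)\big(h(t,Z_t,\Gamma_t,\pi_t) - h(t,Z_t,\Gamma_t,\hat\pi(t,Z_t,\Gamma_t))\big)$, which is nonpositive since $\hat\pi(t,Z_t,\Gamma_t)$ maximises $h(t,Z_t,\Gamma_t,\cdot)$ over $K$. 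Being a nonpositive local $\mathbb{P}^\pi$-supermartingale, $U_A(\mathcal{Y}_\cdot)$ is a genuine supermartingale, whence $V^A(Y_T^{y_0,Z,\Gamma,\hat\pi})\le U_A(y_0)$; the reverse inequality being imposed in the definition of $\mathcal{C}_1$, equality follows. For $\pi_t = \hat\pi(t,Z_t,\Gamma_t)$ the drift vanishes and, using the integrability built into $\mathcal{ZG}$ and \eqref{non-degeneracy-investor} to promote the local martingale to a true one, this $\pi$ attains the value; conversely any best response must make the drift vanish $\d t\otimes\d\mathbb{P}^\pi$-a.e., i.e. $\pi_t\in\mathcal{O}(t,Z_t,\Gamma_t)$, and a measurable-selection argument along the lines of \cite{schal1974selection,cvitanic2018dynamic} rewrites this last set as $\{(\hat\pi(t,Z_t,\Gamma_t))_{t\in[0,T]}:\hat\pi\in\mathcal{O}\}$.

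For the representation half, given $\xi\in\mathcal{C}$ I would introduce the Agent's dynamic certainty-equivalent value $Y$ through $-\exp(-\gamma Y_t) := \mathop{\mathrm{ess\,sup}}_{\pi\in\mathcal{A}}\mathbb{E}^\pi_t\big[-\exp\!\big(-\gamma(\xi - \int_t^T k(\pi_s)\,\d s)\big)\big]$, so that $Y_T = \xi$ and, by the dynamic programming principle, $U_A\big(Y_t - \int_0^t k(\pi_s)\,\d s\big)$ is a $\mathbb{P}^\pi$-supermartingale for each $\pi\in\mathcal{A}$ and a $\mathbb{P}^\pi$-martingale for an optimiser. Because the coefficients are uniformly bounded (\Cref{assumption_bounded}) and the controls live in the compact set $K$ with $\Sigma(t,p)$ of full rank by strict positivity of $p$, the Agent's problem fits the general principal--agent setting of \cite{cvitanic2018dynamic}; invoking the corresponding second-order BSDE theory (see also \cite{possamai2018stochastic,soner2012wellposedness}), $Y$ admits a representation of the form \eqref{opt_contract_1} for some $y_0 = Y_0$, some $\hat\pi\in\mathcal{O}$, and some $\mathbb{F}$-predictable pair $(Z,\Gamma)$, with the well-posedness estimates turning the integrability of $\xi$ coming from \eqref{non-degeneracy-investor} into $(Z,\Gamma)\in\mathcal{ZG}$. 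Then by the verification half $V^A(\xi) = U_A(y_0)$, which is $\ge R$ because $\xi\in\mathcal{C}$; hence $\xi = Y_T^{y_0,Z,\Gamma,\hat\pi}\in\mathcal{C}_1$, so $\mathcal{C}\subseteq\mathcal{C}_1$, and since $\mathcal{C}_1\subseteq\mathcal{C}$ by construction the two contract classes — and therefore the two optimisation problems of the Principal — coincide.

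The hard part will be the representation step. Since the Agent controls the diffusion coefficient $\Sigma(t,\pi_t)$, the measures $\{\mathbb{P}^\pi\}_{\pi\in\mathcal{A}}$ are mutually singular, so $(Z,\Gamma)$ cannot be obtained from a classical martingale representation theorem; one needs the second-order BSDE machinery, in particular the pathwise (aggregated) construction of $\int Z\,\d B$ and of $\int\mathrm{Tr}[\Gamma\,\d\langle B\rangle]$ over the singular family, with the non-decreasing component of the 2BSDE being absorbed into the $\Gamma$-term. Concretely, the technical core is checking that the hypotheses of the reduction theorem of \cite{cvitanic2018dynamic} hold in the present model: continuity and coercivity of $p\mapsto h(t,z,g,p)$ on the compact $K$, existence of a Borel-measurable maximiser as furnished by \cite{schal1974selection}, the uniform bounds of \Cref{assumption_bounded}, and the non-degeneracy \eqref{non-degeneracy-investor}. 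Once these are in place, the stated formulas for $V^A$ and for $\mathcal{A}(Y_T^{y_0,Z,\Gamma,\hat\pi})$ follow from the verification computation above.
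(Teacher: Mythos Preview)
Your approach is essentially the paper's: both reduce to the representation theorem of \cite{cvitanic2018dynamic} after checking that the present model (bounded coefficients via \Cref{assumption_bounded}, compact control set $K$, continuous Hamiltonian, measurable selector from \cite{schal1974selection}) fits its hypotheses. The one technical device the paper makes explicit, and which your write-up hides, is that it does \emph{not} apply \cite{cvitanic2018dynamic} directly to the certainty equivalent $Y$: it first works with $Y^e:=U_A(Y)$, whose generator $h^e(t,z,g,y,p)=-\gamma k(p)y + z\cdot\sigma(t,p)\lambda(t)+\tfrac12\mathrm{Tr}[g\sigma(t,p)\Sigma\sigma(t,p)^\top]$ is \emph{linear} in $y$ (this is exactly where the exponential utility is exploited to land in a Lipschitz setting), obtains the representation $U_A(\xi)=Y_T^{e,y_0^e,Z^e,\Gamma^e}$, and only then changes variables via $Z_t:=-Z_t^e/(\gamma Y_t^e)$, $\Gamma_t:=-\Gamma_t^e/(\gamma Y_t^e)$ and It\^o's formula to recover \eqref{opt_contract_1}. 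Your direct route through $Y$ is equivalent but you should be aware that this linearisation is what makes the invocation of \cite{cvitanic2018dynamic} clean.

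One small gap worth flagging: the sentence ``being a nonpositive local $\mathbb{P}^\pi$-supermartingale, $U_A(\mathcal{Y}_\cdot)$ is a genuine supermartingale'' is not correct as stated. Fatou's lemma yields that conclusion for local supermartingales bounded from \emph{below}, whereas $U_A(\mathcal{Y}_\cdot)\le 0$ is bounded from above; localising and passing to the limit gives the inequality in the wrong direction. You need to use the integrability built into \eqref{non-degeneracy-investor} (the exponent $\gamma'>\gamma$) together with the boundedness of $k$ on the compact $K$ to get the uniform integrability that justifies the limit. This is exactly the role of the $\|\cdot\|_{\mathbb{D}^p}$ and $\|\cdot\|_{\mathbb{H}^p}$ conditions in the paper's definition \eqref{zg_admissible} of $\mathcal{ZG}$, so the fix is routine, but the argument as written does not stand on its own.
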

The form of the admissible contracts we study deserves some remarks. The term $Z_t\cdot \d B_t$ is a remuneration indexed linearly on the state variables. Contrary to the classical Principal-Agent problem where the agent controls the drift of the output process, see \citeauthor*{sannikov2008continuous} \cite{sannikov2008continuous} for example, the admissible contracts \eqref{opt_contract_1} are not only linear functions of the state variables but depend also linearly on their quadratic variation and covariation. This comes from the fact that by investing in the bonds, the investor controls directly the volatility of the portfolio process $X$. Using standard tools of static hedging, this contract can be replicated using futures, log-contracts and volatility products such as variance swaps, see \Cref{sec_practical_implementation} for details. In particular, this ensures that the contracts we recommend are practically implementable.

\medskip
As stated at the beginning of this section, we wish to build an optimal contract based only on the green bonds, the portfolio process and the index of conventional bonds. In this regard, the form we obtained in \Cref{opt_contract_1} is too general, which is why we are now going to restrict our attention to a slightly smaller class of contracts. We thus define for any $(Z,\Gamma)\in\Zc\Gc$
\begin{align*}
    Z_t=: 
    \begin{pmatrix}
    Z^{\text{\rm obs}}\\
    Z^{\text{\sout{\rm obs}}}
    \end{pmatrix}, \;
    \Gamma = :
    \begin{pmatrix}
    \Gamma^{\text{\rm obs}}& \Gamma^{\text{\rm obs},\text{\sout{\rm obs}}}\\
    \Gamma^{\text{\rm obs},\text{\sout{\rm obs}}} & \Gamma^{\text{\sout{\rm obs}}}
    \end{pmatrix},
\end{align*}
where for Lebesgue-almost every $t\in[0,T]$
\begin{align*}
    & Z_t^{\text{\rm obs}}\in \mathbb{R}^{d^g+2},\; Z_t^{\text{\sout{\rm obs}}}\in  \mathbb{R}^{d^c},\; \Gamma_t^{\text{\rm obs}} \in \mathbb{S}_{d^g+2}(\mathbb{R}),\; \Gamma_t^{\text{\sout{\rm obs}}} \in \mathbb{S}_{d^c}(\mathbb{R}), \;\Gamma_t^{\text{\rm obs},\text{\sout{\rm obs}}} \in \mathcal{M}_{d^g+2,d^c}(\mathbb{R}).
\end{align*}
We then consider a simplified Hamiltonian $h^{\text{obs}}:[0,T]\times \mathbb{R}^{d^g+2}\times \mathbb{S}_{d^g+2}(\mathbb{R})\times K\longrightarrow \mathbb{R}$ given by 
\begin{align*}
    h^{\text{obs}}(t,z^{\text{obs}},g^{\text{obs}},p) = -k(p) + z^{\text{obs}} \cdot \mu^{\text{obs}}(t,p) +  \frac{1}{2}\mathrm{Tr}\big[ g^{\text{obs}}\Sigma^{\text{obs}}(t,p)\Sigma (\Sigma^{\text{obs}}(t,p)^{\top})\big], 
\end{align*}
and for all $(t,z^{\text{obs}},g^{\text{obs}})\in [0,T]\times \mathbb{R}^{d^g+2}\times\mathbb{S}_{d^g+2}(\mathbb{R})$, we define
\begin{align*}
    \mathcal{O}^{\text{obs}}(t,z^{\text{obs}},g^{\text{obs}}):=\Big\{\hat p\in K: \hat p \in \underset{p\in K}{\text{argmax}}\big\{h^{\text{obs}}(t,z^{\text{obs}},g^{\text{obs}},p)\big\}\Big\}.
\end{align*}
Following again \citeauthor*{schal1974selection} \cite{schal1974selection}, there exists at least one Borel-measurable map $\hat \pi:[0,T]\times \mathbb{R}^{d^g+2}\times\mathbb{S}_{d^g+2}(\mathbb{R})\longrightarrow K$ such that for every $(t,z^{\text{obs}},g^{\text{obs}})\in [0,T]\times \mathbb{R}^{d^g+2}\times\mathbb{S}_{d^g+2}(\mathbb{R})$, $\hat \pi(t,z^{\text{obs}},g^{\text{obs}})\in \mathcal{O}^{\text{obs}}(t,z^{\text{obs}},g^{\text{obs}})$, and we let $\mathcal{O}^{\text{obs}}$ be the corresponding set of all such maps.

\medskip
We can now state precisely the class of contracts we are concerned with in this paper.
\begin{assumption}
We consider the subset of contracts
\begin{align*}
\mathcal{C}_2:=\Big\{Y_T^{y_0,Z,\Gamma,\hat \pi} \in \mathcal{C}_1:Z^{\text{\sout{\rm obs}}}=\mathbf{0}_{d^c},\Gamma^{\text{\sout{\rm obs}}}=\mathbf{0}_{d^c,d^c},\Gamma^{\text{\rm obs},\text{\sout{\rm obs}}}=\mathbf{0}_{d^g+2,d^c} \Big\}.
\end{align*}
In particular, any $\xi\in \mathcal{C}_2$ is of the form $\xi=Y_T^{y_0,Z^{\text{\rm obs}},\Gamma^{\text{\rm obs}},\hat \pi}$, where for any $t\in[0,T]$,
\begin{align}\label{opt_contract}
\begin{split}
    Y_t^{y_0,Z^{\text{\rm obs}},\Gamma^{\text{\rm obs}},\hat \pi} :=  y_0 + \int_0^t & Z_s^{\text{\rm obs}}\cdot \d B_s^{\text{\rm obs}} + \frac{1}{2}\text{\rm Tr}\Big[\big(\Gamma^{\text{\rm obs}}_s + \gamma Z_s^{\text{\rm obs}}(Z_s^{\text{\rm obs}})^{\top}\big)d\langle B^{\text{\rm obs}}\rangle_s\Big] \\
    & - h^{\text{\rm obs}}\Big(s,Z_s^{\text{\rm obs}},\Gamma_s^{\text{\rm obs}},\hat \pi(s,Z_s^{\text{\rm obs}},\Gamma_s^{\text{\rm obs}}))\Big)\mathrm{d}s,    
\end{split}
\end{align}
where $y_0\geq 0$, $\hat\pi\in\mathcal{O}^{\text{\rm obs}}$ and $(Z^{\text{\rm obs}},\Gamma^{\text{\rm obs}})\in\mathcal{ZG}^{\text{\rm obs}}$ with 
\begin{align*}
    \mathcal{ZG}^{\text{\rm obs}}:= \Big\{&(Z^{\text{\rm obs}},\Gamma^{\text{\rm obs}}): \mathbb{R}^{d^g+2}\times\mathbb{S}_{d^g+2}(\mathbb{R})\text{\rm-valued, }\mathbb{F}\text{\rm-predictable, s.t. } Y_T^{y_0,Z^{\text{\rm obs}},\Gamma^{\text{\rm obs}},\hat \pi} \in\mathcal{C}_2\Big\}.
\end{align*}
\end{assumption}
The optimisation problem of the government that we now consider is\footnote{We use the notation $\mathbb{E}^{(\hat \pi(t,Z_t,\Gamma_t))_{t\in[0,T]}}[\cdot]=:\mathbb{E}^{\hat \pi(Z,\Gamma)}[\cdot]$ }
\begin{align}\label{pb_principal_2}
\begin{split}
    \widetilde{V}_0^P = \sup_{y_0\geq 0}\sup_{(Z^{\text{obs}},\Gamma^{\text{obs}},\hat \pi)\in \mathcal{ZG}^{\text{obs}}\times \mathcal{O}^{\text{obs}} }\mathbb{E}^{\hat \pi(Z,\Gamma)}\bigg[U_P\bigg( X_T -  \sum_{i=1}^{d^g}\int_0^T \kappa\Big(G_i-\big(\hat \pi^{g}(t,Z_t^{\text{obs}},\Gamma_t^{\text{obs}})\big)_i\Big)^2 \mathrm{d}t  -Y_T^{y_0,Z^{\text{obs}},\Gamma^{\text{obs}},\hat \pi}\bigg)\bigg],    
\end{split}
\end{align}
This assumption allows us to consider more tractable contracts for a large portfolio of bonds, even if we consider less general contracts compared to \eqref{opt_contract_1}. Moreover, as the objective of the government is to encourage the acquisition of green bonds, it is natural to consider a more granular contract with respect to the green bonds and to use only the index of conventional bonds as a representative contractible variable of this set of bonds. As we used only deterministic functions to model the risk premium, short-term rate and volatility processes, the optimal incentives of the government can be obtained by maximising a deterministic function, which leads to the following theorem.

\begin{theorem}[Main result] The optimal contract $\xi^\star\in \mathcal{C}_2$ is given by
\begin{align}\label{opt_contract_fin}
\begin{split}
   \xi^\star = Y_T^{0,z^{\star\text{obs}},g^{\star\text{obs}},\pi^\star} = \int_0^T & z^{\star\text{\rm obs}}(t)\cdot \d B_t^{\text{\rm obs}} + \frac{1}{2}\mathrm{Tr}\Big[\big(g^{\star\text{\rm obs}}(t) + \gamma z^{\star\text{\rm obs}}(t)(z^{\star\text{\rm obs}}(t))^{\top}\big)d\langle B^{\text{\rm obs}}\rangle_t\Big] \\
   & - h^{\text{\rm obs}}\Big(t,z^{\star\text{\rm obs}}(t),g^{\star\text{\rm obs}}(t),\pi^\star\big(t,z^{\star\text{\rm obs}}(t),g^{\star\text{\rm obs}}(t)\big)\Big)\mathrm{d}t,    
\end{split}
\end{align}
where for all $t\in[0,T]$,  $z^{\star\text{\rm obs}}(\cdot)$, $g^{\star\text{\rm obs}}(\cdot)$, $\pi^\star\big(\cdot,z^{\star\text{\rm obs}}(\cdot),g^{\star\text{\rm obs}}(\cdot)\big)$ are deterministic functions of time, solving
\begin{align}\label{opt_pb_principal}
\begin{split}
    &\sup_{(z,g,\hat \pi)\in P\times\mathcal{O}^{\text{obs}} } \mathcal{H}\big(t,z,g,\hat \pi(t,z,g)\big),
\end{split}
\end{align}
where $P:=\mathbb{R}^{d^g+2}\times\mathbb{S}_{d^g+2}(\mathbb{R})$ and $\mathcal{H}:[0,T]\times P \times K \longrightarrow \mathbb{R}$ is given by
\begin{align*}
\begin{split}
    \mathcal{H}(t,z,g,p) := & -\sum_{i=1}^{d^g} \big(G_i -p_i\big)^2 - \frac{1}{2}\mathrm{Tr}\Big[(g + \gamma zz^{\top})\Sigma^{\text{\rm obs}}\big(t,p\big)\Sigma(\Sigma^{\text{\rm obs}}\big(t,p\big)^{\top}\big)\Big] \\
    & + h^{\text{\rm obs}}\big(t,z,g,p\big) + \Big(\mu^{\text{\rm obs}}\big(t,p\big)\Big)_1- z^{\top}\mu^{\text{\rm obs}}\big(t,p\big) \\
    &  -\frac{1}{2}\nu^2 \Big(\Big(\Sigma^{\text{\rm obs}}\big(t,p\big)\Big)_{1,:}-z^{\top}\Sigma^{\text{\rm obs}}\big(t,p\big)\Big)^{\top} \Sigma \Big(\Big(\Sigma^{\text{\rm obs}}\big(t,p\big)\Big)_{1,:}-z^{\top}\Sigma^{\text{\rm obs}}\big(t,p\big)\Big). 
\end{split}
\end{align*}
Moreover
\begin{align*}
    \widetilde{V}_0^P = U_P\bigg(\int_0^T  \mathcal{H}\Big(t,z^{\star,\text{obs}}(t),g^{\star,\text{obs}}(t),\pi^\star\big(t,z^{\star,\text{obs}}(t),g^{\star,\text{obs}}(t)\big)\Big)\mathrm{d}t \bigg).
\end{align*}
\end{theorem}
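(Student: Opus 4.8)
The plan is to reduce the Principal's problem \eqref{pb_principal_2} to the pointwise maximisation \eqref{opt_pb_principal} in three moves: eliminate $y_0$, expand the payoff along the explicit contract \eqref{opt_contract} by It\^o's formula, and close the problem by a verification argument whose pointwise part is exactly the maximisation of $\mathcal{H}$. First I would dispose of $y_0$: by \Cref{thm_1_admissible_contract}, $V^A\big(Y_T^{y_0,Z^{\text{\rm obs}},\Gamma^{\text{\rm obs}},\hat\pi}\big)=U_A(y_0)=-e^{-\gamma y_0}$, so the participation constraint $V^A\ge R=-1$ reads $y_0\ge 0$, while $Y_T^{y_0,\cdot}$ depends on $y_0$ only through the additive constant $+y_0$, making the criterion in \eqref{pb_principal_2} nonincreasing in $y_0$; hence it is optimal to take $y_0=0$ and saturate the reservation utility.

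The core step is the It\^o expansion. Fix an admissible $(Z^{\text{\rm obs}},\Gamma^{\text{\rm obs}})\in\mathcal{ZG}^{\text{\rm obs}}$ and $\hat\pi\in\mathcal{O}^{\text{\rm obs}}$, write $\pi_t:=\hat\pi(t,Z^{\text{\rm obs}}_t,\Gamma^{\text{\rm obs}}_t)$ (which lies in $\mathcal{A}$), and set $N_t:=X_t-\sum_{i=1}^{d^g}\int_0^t\kappa\big(G_i-(\pi^g_s)_i\big)^2\d s-Y_t^{0,Z^{\text{\rm obs}},\Gamma^{\text{\rm obs}},\hat\pi}$, so the objective in \eqref{pb_principal_2} equals $\E^{\pi}[U_P(N_T)]$. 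Using that $X$ is the first coordinate of $B^{\text{\rm obs}}$, that $X_0=0$, and that $\d\langle B^{\text{\rm obs}}\rangle_t=\Sigma^{\text{\rm obs}}(t,\pi_t)\,\Sigma\,\Sigma^{\text{\rm obs}}(t,\pi_t)^{\top}\d t$, substituting \eqref{opt_contract} into $N_t$ and applying It\^o's formula gives, with $e_1$ the first canonical vector of $\R^{d^g+2}$ and $a_t:=\Sigma^{\text{\rm obs}}(t,\pi_t)^{\top}\big(e_1-Z^{\text{\rm obs}}_t\big)$,
\begin{equation*}
\d N_t=\Big(\mathcal{H}\big(t,Z^{\text{\rm obs}}_t,\Gamma^{\text{\rm obs}}_t,\pi_t\big)+\tfrac12\nu\,a_t^{\top}\Sigma a_t\Big)\d t+a_t^{\top}\d W_t .
\end{equation*}
Here the drift of $Z^{\text{\rm obs}}\!\cdot\d B^{\text{\rm obs}}$ cancels the $z\cdot\mu^{\text{\rm obs}}$ term inside $h^{\text{\rm obs}}$, the $\tfrac12\mathrm{Tr}[\Gamma^{\text{\rm obs}}\d\langle B^{\text{\rm obs}}\rangle]$ term cancels the $\tfrac12\mathrm{Tr}[g\,\Sigma^{\text{\rm obs}}\Sigma(\Sigma^{\text{\rm obs}})^{\top}]$ inside $h^{\text{\rm obs}}$, and the remaining drift is $\mathcal{H}$ up to the quadratic correction that the last line of $\mathcal{H}$ is built to absorb; this is lengthy but mechanical, and is the only place where the explicit shapes of $\mu^{\text{\rm obs}},\Sigma^{\text{\rm obs}},h^{\text{\rm obs}},k$ enter.

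Then I would close the problem by verification. Put $\overline{\mathcal{H}}(t):=\sup\big\{\mathcal{H}(t,z,g,\hat p):(z,g)\in P,\ \hat p\in\mathcal{O}^{\text{\rm obs}}(t,z,g)\big\}$, a \emph{deterministic} function of $t$ precisely because $r^g,r^c,\eta^g,\eta^c,\sigma^g,\sigma^c,\mu^I,\sigma^I$ are, and consider $\bar V_t:=U_P\big(N_t+\int_t^T\overline{\mathcal{H}}(s)\d s\big)$, so $\bar V_0=U_P\big(\int_0^T\overline{\mathcal{H}}(s)\d s\big)$ is deterministic and $\bar V_T=U_P(N_T)$. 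Feeding the expansion of $\d N_t$ into It\^o's formula yields $\d\bar V_t=\nu\,\bar V_t\big(\overline{\mathcal{H}}(t)-\mathcal{H}(t,Z^{\text{\rm obs}}_t,\Gamma^{\text{\rm obs}}_t,\pi_t)\big)\d t-\nu\,\bar V_t\,a_t^{\top}\d W_t$; since $\bar V_t<0$ and $\mathcal{H}(t,Z^{\text{\rm obs}}_t,\Gamma^{\text{\rm obs}}_t,\pi_t)\le\overline{\mathcal{H}}(t)$ (because $\pi_t\in\mathcal{O}^{\text{\rm obs}}(t,Z^{\text{\rm obs}}_t,\Gamma^{\text{\rm obs}}_t)$), the drift is $\le 0$, so $\bar V$ is a local supermartingale, upgraded to a true supermartingale by the integrability built into $\mathcal{ZG}^{\text{\rm obs}}$ (cf.\ \Cref{zg_admissible}) together with \Cref{assumption_bounded}; hence $\E^{\pi}[U_P(N_T)]\le U_P\big(\int_0^T\overline{\mathcal{H}}(s)\d s\big)$ and thus $\widetilde{V}_0^P\le U_P\big(\int_0^T\overline{\mathcal{H}}(s)\d s\big)$. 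For the reverse inequality I would exhibit the maximiser: $\mathcal{H}(t,z,g,p)$ does not depend on its $g$-slot (the two $\tfrac12\mathrm{Tr}[g\,\Sigma^{\text{\rm obs}}\Sigma(\Sigma^{\text{\rm obs}})^{\top}]$ terms cancel, only $\hat\pi(t,z,g)$ still does) and is concave in $z$ with an explicit maximiser, so \eqref{opt_pb_principal} is attained; a measurable-selection argument (as already used for $\mathcal{O}^{\text{\rm obs}}$, following \cite{schal1974selection}) then produces deterministic Borel maps $z^{\star\text{\rm obs}},g^{\star\text{\rm obs}}$ and $t\mapsto\pi^\star\big(t,z^{\star\text{\rm obs}}(t),g^{\star\text{\rm obs}}(t)\big)$ realising it, bounded by \Cref{assumption_bounded}, so that $(z^{\star\text{\rm obs}},g^{\star\text{\rm obs}})\in\mathcal{ZG}^{\text{\rm obs}}$ and $\xi^\star$ of \eqref{opt_contract_fin} is admissible. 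Along this choice the drift of $\bar V$ vanishes, $\bar V$ is a true martingale, and $\E^{\pi}[U_P(N_T)]=U_P\big(\int_0^T\overline{\mathcal{H}}(s)\d s\big)=U_P\big(\int_0^T\mathcal{H}\big(t,z^{\star\text{\rm obs}}(t),g^{\star\text{\rm obs}}(t),\pi^\star(t,z^{\star\text{\rm obs}}(t),g^{\star\text{\rm obs}}(t))\big)\d t\big)$, which gives both the announced value of $\widetilde{V}_0^P$ and the optimality of $\xi^\star$.

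The main obstacle is the interplay between the admissibility class and the martingale properties: $\mathcal{ZG}^{\text{\rm obs}}$ must be narrow enough that $\bar V$ is a genuine supermartingale for \emph{every} admissible control (otherwise the upper bound fails) yet wide enough to contain the deterministic optimiser, and one must check that the functions delivered by the selection argument do yield a contract in $\mathcal{C}_2$ — i.e.\ satisfy \eqref{non-degeneracy-investor} — which is where the boundedness afforded by \Cref{assumption_bounded} and, where needed, a non-degeneracy property of $\Sigma^{\text{\rm obs}}(t,p)\,\Sigma\,\Sigma^{\text{\rm obs}}(t,p)^{\top}$ over $p\in K$ are used. The algebra of the It\^o expansion, while the longest piece, is routine.
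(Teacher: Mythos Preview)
Your proof is correct and follows essentially the same route as the paper: saturate $y_0=0$, expand the objective along the explicit contract, reduce to the pointwise maximisation of $\mathcal{H}$, and verify that deterministic bounded maximisers attain the bound. The only cosmetic difference is packaging: you run the upper bound as a supermartingale argument on $\bar V_t=U_P\big(N_t+\int_t^T\overline{\mathcal H}\,\d s\big)$, while the paper factors $U_P(N_T)=U_P\big(\int_0^T\mathcal H\,\d t\big)\cdot\mathcal E(-\nu M)_T$ and uses $\E[\mathcal E]\le 1$ together with a coercivity bound (in place of your concavity/selection argument) to guarantee attainment --- these are equivalent formulations of the same verification.
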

\begin{proof}
The term in the exponential of the optimisation problem \eqref{pb_principal_2} is a linear function of $y_0$ hence the reservation utility of the investor is saturated using $y_0^\star =0$. Define for any martingale $M$ the operator
\begin{align*}
    \mathcal{E}(M)_T := \exp\bigg(-\nu M_T + \frac{1}{2}\nu^2 \langle M \rangle_T \bigg). 
\end{align*}
The government has now to solve
\begin{align*}
    \sup_{(Z,\Gamma,\hat \pi)\in \mathcal{ZG}^{\text{obs}}\times \mathcal{O}^{\text{obs}}
    }\mathbb{E}^{\hat\pi(Z,\Gamma)}\Bigg[&U_P\bigg(\int_0^T  \bigg(\big(\mu^{\text{\rm obs}}\big(t,\hat \pi(t,Z_t,\Gamma_t)\big)\Big)_1 - \sum_{i=1}^{d^g}\big(G_i -\hat \pi_i(t,Z_t,\Gamma_t)\big)^2 \\
    & - \frac{1}{2}\mathrm{Tr}\Big[\Big(\Gamma\big(t,\hat\pi(t,Z_t,\Gamma_t)\big) + \gamma Z_tZ_t^{\top}\Big)\Sigma^{\text{\rm obs}}\big(t,\hat\pi(t,Z_t,\Gamma_t)\big)\Sigma \Big(\Sigma^{\text{\rm obs}}\big(t,\hat\pi(t,Z_t,\Gamma_t)\big)\Big)^{\top}\Big] \\
    & + h^{\text{\rm obs}}\big(t,Z_t,\Gamma_t,\hat\pi(t,Z_t,\Gamma_t)\big)\bigg)\mathrm{d}t \bigg) \\
    &\times \exp\bigg(-\nu\int_0^T\Big(\Big(\Sigma^{\text{\rm obs}}\big(t,\hat\pi(t,Z_t,\Gamma_t)\big)\Big)_{0,:} - Z_t^{\top}\Sigma^{\text{\rm obs}}\big(t,\hat\pi(t,Z_t,\Gamma_t)\big)  \Big) \d W_t \bigg)\Bigg]. 
\end{align*}
We make appear the stochastic exponential so that the previous supremum becomes
\begin{align*}
    \sup_{(Z,\Gamma,\hat \pi)\in \mathcal{ZG}^{\text{obs}}\times \mathcal{O}^{\text{obs}}
    }\mathbb{E}^{\hat\pi(Z,\Gamma)}\Bigg[&U_P\bigg(\int_0^T  \mathcal{H}\big(t,Z_t,\Gamma_t,\hat\pi(t,Z_t,\Gamma_t)\big)\mathrm{d}t \bigg)\\
    &\times \mathcal{E}\bigg(\int_0^\cdot\Big(\Big(\Sigma^{\text{\rm obs}}\big(t,\hat\pi(t,Z_t,\Gamma_t)\big)\Big)_{0,:} - Z_t^{\top}\Sigma^{\text{\rm obs}}\big(t,\hat\pi(t,Z_t,\Gamma_t)\big)  \Big) \d W_t\bigg)_T\Bigg].
\end{align*}
As the function $U_P(x)$ is increasing and the expectation of a stochastic exponential is bounded by one, we obtain
\begin{align*}
    \widetilde{V}_0^P \leq U_P\bigg(\int_0^T  \sup_{(z,g,\hat\pi)\in P\times \mathcal{O}^{\text{obs}}}\mathcal{H}\big(t,z,g,\hat\pi(t,z,g)\big)\mathrm{d}t \bigg). 
\end{align*}
We have 
\begin{align*}
    \mathcal{H}\big(t,z,g,\hat\pi(t,z,g)\big) \leq & -\frac{1}{2}\mathrm{Tr}\Big[\gamma zz^{\top}\Sigma^{\text{\rm obs}}\big(t,\hat\pi(t,z,g)\big)\Sigma(\Sigma^{\text{\rm obs}}\big(t,\hat\pi(t,z,g)\big)^{\top}\big)\Big] + \Big(\mu^{\text{\rm obs}}\big(t,\hat\pi(t,z,g)\big)\Big)_1.
\end{align*}
As $\hat{\pi}(t,z,g)<+\infty$ is uniformly bounded and strictly positive, $\Sigma$ is definite positive and the components of $\Sigma^{\text{\rm obs}}$ are positive, we observe that when $\|z\|_2+\|g\|_2\longrightarrow+\infty$, the first term goes to $-\infty$ while the second term is bounded. Therefore, the supremum on $\mathcal{O}^{\text{obs}}$ cannot be attained for infinite values.

\medskip
If we now choose the incentives $z^{\star,\text{obs}}(t),g^{\star,\text{obs}}(t),\pi^\star\big(t,z^{\star,\text{obs}}(t),g^{\star,\text{obs}}(t)\big)$ as the maximisers of $\mathcal{H}$, they are Borel-measurable deterministic functions of $t\in[0,T]$ thus belong to the set $\mathcal{ZG}^{\text{obs}}$ and are bounded on $[0,T]$, so that
\begin{align*}
    \mathcal{E}\bigg(\int_0^\cdot\Big(\Big(\Sigma^{\text{\rm obs}}\big(t,\pi^\star(t,z^{\star,\text{obs}}(t),g^{\star,\text{obs}}(t))\big)\Big)_{0,:} - z^{\star,\text{obs}}(t)^{\top}\Sigma^{\text{\rm obs}}\Big(t,\pi^\star\big(t,z^{\star,\text{obs}}(t),g^{\star,\text{obs}}(t)\big)\Big)  \Big) \d W_t\bigg)_T
\end{align*}
is a $\mathbb{P}^{\pi^\star}$-martingale and we obtain
\begin{align*}
    \widetilde{V}_0^P = U_P\bigg(\int_0^T  \mathcal{H}\Big(t,z^{\star,\text{obs}}(t),g^{\star,\text{obs}}(t),\pi^\star\big(t,z^{\star,\text{obs}}(t),g^{\star,\text{obs}}(t)\big)\Big)\mathrm{d}t \bigg).
\end{align*}
\end{proof}
Static maximisation \eqref{opt_pb_principal} over $(z,g)\in P$ can easily be handled with classic root-finding algorithms for a large portfolio of green bonds.\footnote{In practice, we observe that for the set of parameters we choose for the numerical experiences, the function $h^{\text{obs}}$ is strictly concave with respect to its last variable thus admits a unique maximizer $\hat\pi$.} Before moving to the numerical experiments, we discuss the form and implementability of the optimal contract.  

\subsection{Discussion}

\subsubsection{On the form of the optimal contracts}

The contract consists of the following elements: 
\begin{itemize}
    \item The term $Z^{\star \text{\rm obs}}_X$ is a compensation given to the investor with respect to the risk associated to the evolution of the portfolio process. If $Z^{\star \text{\rm obs}}_X>0$ (resp. $Z^{\star \text{\rm obs}}_X<0$), the government encourages to increase (resp. decrease) the value of the portfolio: between two times $t_2>t_1$, the investor receives approximately the amount $(Z^{\star \text{\rm obs}}_X)_{t_1}(X_{t_2}-X_{t_1})$.  
    \item For $i\in \{1,\dots,d^g\}$, the term $Z_i$ is a compensation given to the investor with respect to the volatility risk associated to the evolution of the $i$-th green bond price. Between two times $t_2>t_1$, the investor receives approximately the amount $(Z^{\star \text{\rm obs}}_i)_{t_1}(W^i_{t_2}-W^i_{t_1})$: if $Z^{\star \text{\rm obs}}_i$ is close to zero, the government does not give compensation with respect to the volatility of the $i$-th green bond and conversely for $Z^{\star \text{\rm obs}}_i$ far from zero. The intuition behind $Z^{\star \text{\rm obs}}_I$ is the same.
    \item The diagonal terms of $\Gamma^{\text{\rm obs}}$ are compensations with respect to the quadratic variation of the portfolio process and the risk sources of the green bonds and the index. For example if $\Gamma^{\star \text{\rm obs}}_X>0$, the government provides remuneration to the investor for a high quadratic variation (which here can be thought of as volatility) of the portfolio process. If $\Gamma_X<0$, the government penalises a high volatility of the portfolio process. 
    \item The non-diagonal terms of $\Gamma^{\star \text{\rm obs}}$ are compensations with respect to the quadratic covariation of the portfolio process and the risk sources of the green bonds and the index. For example, if $\Gamma^{\star \text{\rm obs}}_{X,i}>0$ for $i\in \{1,\dots,d^g\}$ the government provides remuneration to the investor for similar moves of the portfolio process and the $i$-th green bond. If $\Gamma^{\star \text{\rm obs}}_{X,i}<0$, the government encourages opposite moves of the portfolio process and the $i$-th green bond.
    \item The term $G^{\text{\rm obs}}(t,Z^{\star \text{\rm obs}},\Gamma^{\star \text{\rm obs}})$ is a continuous coupon that is given to the investor. It corresponds to the utility of the investor in the case $\xi=0$.
\end{itemize}

For reasonable choices of parameters $(\alpha,\beta,G)$, the supremum of $h^{\text{obs}}$ and in \eqref{opt_pb_principal} are strictly concave functions so that an optimiser is quickly found using root-finding algorithms. Note that the optimal contract is indexed on the portfolio process $X$ the sources of risk coming from the green bonds $W^g$ and the one coming from the index $W^I$. This can be reformulated as an indexing on $X$ and the prices of the bonds. In this case we define
\begin{align*}
   B^{\text{\rm obs},p} :=
   \begin{pmatrix}
        X \\
        \log(P^g)\\
        \log(P^I)
   \end{pmatrix}, \; 
   B^{\text{\sout{\rm obs}},p} := \log(P^c),
\end{align*}
\begin{align*}
   & \d B_t^{\text{\rm obs},p} := \mu^{\text{\rm obs},p}(t,\pi_t) \mathrm{d}t + \Sigma^{\text{\rm obs},p}(t,\pi_t) \d W_t, \;  \d B_t^{\text{\sout{\rm obs}},p} := \mu^{\text{\sout{\rm obs}},p}(t) \mathrm{d}t+ \Sigma^{\text{\sout{\rm obs}},p}(t) \d W_t,
\end{align*}
where 
\begin{align*}
    & \mu^{\text{\rm obs},p}(t,\pi) := 
    \begin{pmatrix}
        \pi^g \cdot \big(r^g(t)+\eta^g(t)\circ\sigma^g(t)\big) + \pi^c \cdot \big(r^c(t)+\eta^c(t)\circ\sigma^c(t)\big) + \pi^I \mu^I(t) \\
        r^g(t)+\eta^g(t)\circ\sigma^g(t) - (\sigma^g(t))^{\top}\Sigma^g \sigma^g(t) \\
        \mu^I(t) - \frac{\big(\sigma^I(t)\big)^2}{2}
    \end{pmatrix},\\
   & \Sigma^{\text{\rm obs},p}(t,\pi) := 
    \begin{pmatrix}
        (\pi^g\circ\sigma(t)^g)^{\top} &(\pi^c\circ\sigma(t)^c)^{\top} & \pi^I \sigma^I(t) \\
        \text{diag}\big(\sigma^g(t)\big) & \mathbf{0}_{d^g,d^c} & \mathbf{0}_{d^g,1} \\
        \mathbf{0}_{1,d^g} & \mathbf{0}_{1,d^c} & \sigma^I(t)
    \end{pmatrix}, \\
    & \mu^{\text{\sout{\rm obs}},p}(t) := 
    \begin{pmatrix}
    r^c(t)+\eta^c(t)\circ\sigma^c(t) - (\sigma^c(t))^{\top}\Sigma^c \sigma^c(t)
    \end{pmatrix}, \; \Sigma^{\text{\sout{\rm obs}},p}(t) := 
    \begin{pmatrix}
        \mathbf{0}_{d^c,d^g} & \text{diag}\big(\sigma^c(t)\big) & \mathbf{0}_{d^c,1}
    \end{pmatrix}. 
\end{align*}
This leads to minor changes in the computations and the optimal incentives. 
\subsubsection{On the practical implementation of the contract}\label{sec_practical_implementation}

We will show in the numerical section that the processes $(\pi^\star,Z^\star,\Gamma^\star)$ show a rather constant behaviour through the period $[0,T]$. Thus, the optimal contract does not need a frequent re-calibration throughout the year. This suggests the following approximation
\begin{align}\label{approx_contrat}
    \xi^\star \approx \xi^\star_0 + \bar Z^{\star \text{\rm obs}}\cdot B_T^{\text{\rm obs}} + \frac{1}{2}\mathrm{Tr}\Big[(\bar \Gamma^{\star \text{\rm obs}}+ \gamma \bar Z^{\star \text{\rm obs}} (\bar Z^{\star \text{\rm obs}})^{\top}) \langle B^{\text{\rm obs}}\rangle_T\Big] - \int_0^T h^{\text{\rm obs}}\Big(t,\bar Z^{\star \text{\rm obs}},\bar \Gamma^{\star \text{\rm obs}},\pi^\star(t,\bar Z^{\star \text{\rm obs}},\bar \Gamma^{\star \text{\rm obs}})\Big) \mathrm{d}t, 
\end{align}
where $\bar Z^{\star \text{\rm obs}},$ and $\bar \Gamma^{\star \text{\rm obs}}$ are constants corresponding the average of $z^{\star \text{\rm obs}}(t),g^{\star \text{\rm obs}}(t)$ over $[0,T]$ defined by
\begin{align*}
    & \bar Z^{\star \text{\rm obs}}=\big(\bar Z_{X}^{\star \text{\rm obs}}, \bar Z_{1}^{\star \text{\rm obs}},\dots,\bar Z_{d^g}^{\star \text{\rm obs}},\bar Z_{I}^{\star \text{\rm obs}}\big)^{\top} \in \mathbb{R}^{d^g+2}, \; \bar \Gamma^{\star \text{\rm obs}} =
    \begin{pmatrix}
    \bar \Gamma_X^{\star \text{\rm obs}} & \Gamma_{X,1}^{\star \text{\rm obs}}&\dots& \Gamma_{X,d^g}^{\star \text{\rm obs}}& \bar\Gamma_{X,I}^{\star \text{\rm obs}} \\
    \bar\Gamma_{X,1}^{\star \text{\rm obs}} & \bar \Gamma_1^{\star \text{\rm obs}} & \dots & \bar \Gamma_{1,d^g}^{\star \text{\rm obs}} & \bar \Gamma_{1,I}^{\star \text{\rm obs}} \\
     \vdots & \vdots & \ddots & \vdots & \vdots \\
     \vdots & \vdots & \vdots & \ddots & \Gamma_{d^g,I}^{\star \text{\rm obs}}  \\
     \bar \Gamma_{X,I}^{\star \text{\rm obs}} &  \Gamma_{1,I}^{\star \text{\rm obs}}  & \dots & \Gamma_{d^g,I}^{\star \text{\rm obs}} & \Gamma_I^{\star \text{\rm obs}} 
    \end{pmatrix} \in \mathbb{S}_{d^g+2}(\mathbb{R}).
\end{align*}
In order to provide a practical implementation of the contract, we propose a static replication of its payoff using financial instruments. First, note that the incentives $\bar Z_X^{\star \text{\rm obs}},$ and $\bar \Gamma_X^{\star \text{\rm obs}}$ are indexed on the holdings of the investor, thus do not need any replication using financial instruments. The portion $\bar Z^{\star \text{\rm obs}}\cdot B_T^{\text{\rm obs}}$ of the contract can be easily replicated using log-contracts. For example, for $i\in \{1,\dots,d^g\}$, we replicate $\bar Z_{i}^{\star \text{\rm obs}} (B_T^{\text{\rm obs}})_i$ using a long position of size $Z_{i}^{\star \text{\rm obs}}$ on a log-contract on the $i$-th green bond with maturity $T$. In this section, all the derivatives products will have a maturity equal to $T$. 

\medskip
The portion of the contract with respect to quadratic variation and covariation terms are more subtle to replicate. Define the matrix $\tilde{\mathcal{C}}\in \mathbb{S}_{d^g+2}(\mathbb{R})$ whose coefficients are given by
\begin{align*}
    \tilde{\mathcal{C}}_{i,j} := \sum_{k=1}^{d^g+2} \mathcal{C}_{i,k}\langle B_{k,j}^{\text{\rm obs}}\rangle_T, \; \mathcal{C}_{i,j} := \bar \Gamma_{i,j}^{\star \text{\rm obs}} + \gamma \bar Z_i^{\star \text{\rm obs}} \bar Z_j^{\star \text{\rm obs}}, \;  (i,j)\in\{1,\dots, d^g+2\}.
\end{align*}
Then, we can rewrite $ \frac{1}{2}\mathrm{Tr}\Big[\big(\bar \Gamma^{\star \text{\rm obs}}+ \gamma \bar Z^{\star \text{\rm obs}} (\bar Z^{\star \text{\rm obs}})^{\top}\big) \langle B^{\text{\rm obs}}\rangle_T\Big] = \frac{1}{2}\sum_{i=1}^{d^g+2}\tilde{\mathcal{C}}_{i,i}.$ Following the reasoning of \citeauthor*{carr2008robust} \cite{carr2008robust}, we note that the quadratic variations and covariations on the logarithm of the green bonds and the index of conventional bonds can be replicated statically using variance and covariance swaps on the bonds. Finally, the portfolio process is equivalent to holding $\pi^{\star,g}$ green bonds, $\pi^{\star,c}$ conventional bonds and $\pi^{\star,I}$ index. Thus, the quadratic covariation between the portfolio process $X$ and the bonds can be replicated using a linear combination of variance and covariance swaps. 

\medskip
We are now in position to state the replication strategy for the implementation of the contract. The proof is an application of the no-arbitrage principle and It\=o's formula on the logarithm of the bond prices. 
\begin{proposition}
The replication strategy on $[0,T]$ of the optimal contract in \eqref{approx_contrat} is as follow: 
\begin{itemize}
    \item For $i\in \{1,\dots,d^g\}$, hold a position of size $\bar Z_{i}^{\star \text{\rm obs}}$ in a log-contract on the $i$-th green bond. 
    \item Hold a position of size $\bar Z_{I}^{\star \text{\rm obs}}$ in a log-contract on the index of conventional bonds. 
    \item For $i\in \{2,\dots,d^g+1\}$, $k\in \{2,\dots,d^g+1\}$, hold a position of size $\frac{1}{2}\mathcal{C}_{i,k}$ in a covariance swap between the $(i-1)$-th and the $(k-1)$-th green bonds. 
    \item For $i=d^g+2$, $k\in \{2,\dots,d^g+1\}$, hold a position of size $\frac{1}{2}\mathcal{C}_{i,k}$ in a covariance swap between the index of conventional bonds and the $(k-1)$-th green bonds. 
    \item For $i=k=d^g+2$, hold a position of size $\frac{1}{2}\mathcal{C}_{i,i}$ in a variance swap on the index of conventional bonds.    
    \item For $i=1$, $k\in \{2,\dots,d^g+1\}$, $l^g\in \{1,\dots,d^g\},$ $l^c\in \{1,\dots,d^c\}$, hold a position of size $\frac{1}{2}\mathcal{C}_{i,k}\pi_{l^g}^{\star,g}$ in a (co)variance swap between the $(k-1)$-th and the $l^g$-th green bonds, a position of size $\frac{1}{2}\mathcal{C}_{i,k}\pi_{l^c}^{\star,c}$ between the $(k-1)$-th green bond and the $l^c$-th conventional bonds, and a position of size $\frac{1}{2}\mathcal{C}_{i,k}\pi^{\star,I}$ in a covariance swap between the index of conventional bonds and the $(k-1)$-th green bond. 
    \item For $i=1$, $k=d^g+2$, $l^g\in \{1,\dots,d^g\},$ $l^c\in \{1,\dots,d^c\}$, hold a position of size $\frac{1}{2}\mathcal{C}_{i,k}\pi_{l^g}^{\star,g}$ in a covariance swap between the index of conventional bonds and the $l^g$-th green bond, a position of size $\frac{1}{2}\mathcal{C}_{i,k}\pi_{l^c}^{\star,c}$ between the index of conventional bonds and the $l^c$-th conventional bond, and a position of size $\frac{1}{2}\mathcal{C}_{i,k}\pi^{\star,I}$ in a variance swap on the index of conventional bonds. 
\end{itemize}
\end{proposition}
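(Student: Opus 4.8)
The plan is to treat the approximate contract \eqref{approx_contrat} as a sum of three blocks — a deterministic block, a block linear in the terminal contractible state, and a block made of terminal quadratic variations and covariations — and, for each non-deterministic piece, to write down a static position in traded instruments (cash, log-contracts, variance and covariance swaps) whose payoff at $T$ coincides $\mathbb{P}$-almost surely with that piece. Since, under the no-arbitrage assumption, any two self-financing strategies with the same $\mathcal{F}_T$-payoff have the same price, matching payoffs term by term then yields the asserted replication.

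First I would dispose of the easy terms. The constant $\xi_0^\star$ and the deterministic time integral $\int_0^T h^{\mathrm{obs}}(\dots)\,\d t$ in \eqref{approx_contrat} are non-random, hence replicated by a cash amount settled at $T$. In the price parametrisation introduced above, $\bar Z^{\star\mathrm{obs}}\cdot B_T^{\mathrm{obs},p} = \bar Z_X^{\star\mathrm{obs}}X_T + \sum_{i=1}^{d^g}\bar Z_i^{\star\mathrm{obs}}\log P_i^g(T,T_i^g) + \bar Z_I^{\star\mathrm{obs}}\log P_T^I$; the first summand is indexed on the investor's own holdings and needs no instrument, while each remaining summand is exactly the payoff of a log-contract of maturity $T$ on the corresponding green bond (resp. the index), held in quantity $\bar Z_i^{\star\mathrm{obs}}$ (resp. $\bar Z_I^{\star\mathrm{obs}}$), which by \cite{carr2008robust} is itself synthesised from vanilla options. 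Likewise the term $\tfrac12\bar\Gamma_X^{\star\mathrm{obs}}\langle X\rangle_T$ (the $(1,1)$ entry of $\mathcal C$) is indexed only on the investor's holdings and is settled directly.

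Next I would identify the remaining quadratic (co)variation terms via It\=o's formula. From \eqref{dynamics_bonds}, $\d\log P_i^g = \big(r_i^g+\eta_i^g\sigma_i^g-\tfrac12(\sigma_i^g)^2\big)\d t + \sigma_i^g\,\d W_i^g$, and analogously for $P_i^c$ and $P^I$, so the diffusion part of each $\log P$ coincides with that of the associated return, the deterministic drift corrections being absorbed into the cash block; hence $\d\langle\log P_a,\log P_b\rangle_t$ equals $\sigma_a(t)\sigma_b(t)$ times the relevant correlation, for any pair among green bonds, conventional bonds and the index. Moreover, under the constant-holdings approximation underlying \eqref{approx_contrat}, $\d X_t = \pi^{\star,g}\cdot\d R^g + \pi^{\star,c}\cdot\d R^c + \pi^{\star,I}\,\d R^I$ with $\pi^\star$ constant, so the martingale part of $X$ is that same constant linear combination of the martingale parts of the $\log P$'s, giving, for $k\in\{1,\dots,d^g\}$,
\[
\langle X,\log P_k^g\rangle_T=\sum_{l^g=1}^{d^g}\pi^{\star,g}_{l^g}\langle\log P_{l^g}^g,\log P_k^g\rangle_T+\sum_{l^c=1}^{d^c}\pi^{\star,c}_{l^c}\langle\log P_{l^c}^c,\log P_k^g\rangle_T+\pi^{\star,I}\langle\log P^I,\log P_k^g\rangle_T,
\]
and the analogous identity for $\langle X,\log P^I\rangle_T$. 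Plugging these identities, together with $\mathcal C=\bar\Gamma^{\star\mathrm{obs}}+\gamma\bar Z^{\star\mathrm{obs}}(\bar Z^{\star\mathrm{obs}})^{\top}$, into $\tfrac12\mathrm{Tr}\big[\mathcal C\,\langle B^{\mathrm{obs},p}\rangle_T\big]=\tfrac12\sum_i\tilde{\mathcal C}_{i,i}$, and collecting the total coefficient in front of each elementary covariation $\langle\log P_a,\log P_b\rangle_T$ (tracking the symmetry of $\mathcal C$ and the overall factor $\tfrac12$), each such covariation is, by \cite{carr2008robust}, the payoff of a covariance swap (a variance swap if $a=b$) of maturity $T$, replicable from option strips; reading off the coefficients produces exactly the positions listed — the pure bond/index entries of $\mathcal C$ (rows $2,\dots,d^g+1$ and $d^g+2$) give the green–green, index–green and index variance swaps, while the entries of $\mathcal C$ involving $X$ (row $1$) give the covariance swaps weighted by $\pi^{\star,g}_{l^g}$, $\pi^{\star,c}_{l^c}$ and $\pi^{\star,I}$.

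The main obstacle is the treatment of the portfolio process $X$, which is not itself a traded asset: its covariations with the bonds must be rewritten through the investor's holdings and the covariations of the traded bonds, and this identification is exact precisely because \eqref{approx_contrat} freezes $\pi^\star$, $\bar Z^{\star\mathrm{obs}}$ and $\bar\Gamma^{\star\mathrm{obs}}$ to constants — with the genuinely time-varying optimisers of \eqref{opt_contract_fin} one would obtain only an approximate hedge, or would need $t$-dependent swap notionals. The remainder — verifying that every It\=o drift correction (the $-\tfrac12\sigma^2$ shifts and the $h^{\mathrm{obs}}$ integral) lands in the cash block, and bookkeeping the indices and the factors $\tfrac12$ in $\tfrac12\sum_i\tilde{\mathcal C}_{i,i}$ — is routine.
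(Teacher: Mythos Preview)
Your proposal is correct and follows the same approach as the paper, which merely states that ``the proof is an application of the no-arbitrage principle and It\=o's formula on the logarithm of the bond prices'' after the informal discussion preceding the proposition. Your write-up simply makes explicit the decomposition, the It\=o computation on $\log P$, and the bilinear expansion of $\langle X,\cdot\rangle_T$ through the frozen weights $\pi^\star$ that the paper leaves implicit.
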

The contract can be implemented practically only by using the value of the portfolio of bonds, log-contracts, variance and covariance swaps on the different bonds. 
\begin{remark}\label{remark_implementation}
We would like to emphasise that, even though it is possible to replicate in practice the optimal contract using variance and covariance swaps on the government bonds, these derivatives might be highly illiquid on financial markets. However, it is possible to replicate these volatility derivatives using the log-contracts and the bonds. Indeed, a variance swap on a bond $P_t$ $($we omit to describe the type of bond for notational simplicity$)$ of maturity $T$ can be replicated by holding for all $t\in [0,T]$ one log-contract that pays $-2\log(P_T/P_0)$ and $2/P_t$ bonds $P_t$. A covariance swap on the bonds $P_t^1$, and $P_t^2$ can be replicated by holding for all $t\in [0,T]$ one log-contract that pays $-2\log(P_T^1/P_0^1)$, one log-contract that pays $-2\log(P_T^2/P_0^2)$, short $\frac{1}{2}$ variance swap on $P^1$, and short $\frac{1}{2}$ variance swap on $P^2$, long $1/(P_t^1 P_t^2)$ bond $P_t^3 := P_t^1 P_t^2$. Thus, the optimal contract $\xi$ in \eqref{opt_contract_fin} can be implemented only using bond prices and log-contracts. 
\end{remark}
Finally, note that if vanilla options on the futures on the bonds are available on the market, one can use the Carr-Madan formula, see \citeauthor*{carr1999option} \cite{carr1999option} to replicate the log-contract payoffs in \Cref{remark_implementation}. Thus, the optimal contract in \eqref{approx_contrat} can be implemented in practice in three different ways: using the bond prices, the portfolio process, the variance and covariance swaps on the bonds; using the bond prices, the portfolio process, and the log-contracts on the bonds; or using the bond prices, the portfolio process, and vanilla options on the bond prices.

\section{Numerical results}\label{sec_numerical}

In the current section, we provide numerical examples illustrating the efficiency of our incentives method. 
\subsection{Data, key results and remarks for the policy-maker}

We illustrate our methodology on an example with real-world data. The dataset is composed of $3$ French governmental bonds, one green bond and two conventional bonds with the following characteristics.

\begin{table}[H]
\begin{center}
\begin{tabular}{c|c|c|c|c|c|c|}
\cline{2-7}

                                           & \small Bloomberg Ticker & \small Valuation date & \small Maturity   &\small Amount issued &\small Issue price &\small Coupon \\ \hline
\multicolumn{1}{|c|}{\small Green bond}           & \small FRTR 1 3/4  &  \small 24/01/2017     & \small 25/06/2039 & \small 27.375b       & \small 100.162     & \small1.75   \\ \hline
\multicolumn{1}{|c|}{\small Conv. bond 1} & \small FRTR 6  &\small 02/01/1994     &\small 25/10/2025 & \small 30.654b       &\small 95.29       & \small 6.     \\ \hline
\multicolumn{1}{|c|}{\small Conv. bond 2} & \small FTRT 4 &\small 09/03/2010     & \small 25/04/2060 & \small16.000b       & \small 96.34       &\small 4.     \\ \hline
\end{tabular}
\end{center}
\end{table}

We also define the index of conventional bonds $I_t$ as a geometric average of the conventional bonds, weighted by the amount issued. We perform the calibration using the daily prices of the bonds from $10/04/2019$ to $10/04/2020$ and the following affine parametrisation for short-term rates, volatilities and risk premiums:
\begin{align*}
    & r^g(t)=a^{r,g}+b^{r,g}(T^g-t),\; \eta^g(t)=a^{\xi,g}+b^{\xi,g}(T^g-t),\; \sigma^g(t)=a^{\sigma,g}+b^{\sigma,g}(T^g-t), \\
    & r^{1,c}(t)=a_1^{r,c}+b_1^{r,c}(T^{1,c}-t),\; \eta^{1,c}(t)=a_1^{\xi,c}+b_1^{\xi,c}(T^{1,c}-t),\; \sigma^{1,c}(t)=a_1^{\sigma,c}+b_1^{\sigma,c}(T^{1,c}-t), \\
    & r^{2,c}(t)=a_2^{r,c}+b_2^{r,c}(T^{2,c}-t),\; \eta^{2,c}(t)=a_2^{\xi,c}+b_2^{\xi,c}(T^{2,c}-t),\; \sigma^{2,c}(t)=a_2^{\sigma,c}+b_2^{\sigma,c}(T^{2,c}-t) \\
    & \mu^I(t) = a^{\mu,I} + b^{\mu,I}(T^I-t), \; \sigma^I(t) = a^{\sigma,I} + b^{\sigma,I}(T^I-t),
\end{align*}
with $T^g=19.73,$ $T^{1,c}=6.06$, $T^{2,c}=40.58$, and $T^I=18.29$. In order to calibrate the dynamics of the bonds in \eqref{dynamics_bonds} over the period, we use a classic least-square algorithm and we obtain the following set of parameters
\begin{align*}
    & a^{r,g}= -0.07, \; b^{r,g} = 0.66, \; a^{\xi,g} = 0.38, \; b^{\xi,g}=0.13, \; a^{\sigma,g}=0.41, \; b^{\sigma,g}=0.31, \\
    & a_1^{r,c}= -0.05,\; b_1^{r,c} = -0.91,\; a_1^{\xi,c}= 0.01,\; b_1^{\xi,c} = 0.30,\; a_1^{\sigma,c}= 0.11,\; b_1^{\sigma,c} = 0.26, \\
    & a_2^{r,c}= 0.28,\; b_2^{r,c} = 0.02, \; a_2^{\xi,c}= 0.12,\; b_2^{\xi,c} = -0.99, \; a_2^{\sigma,c}= 0.10,\; b_2^{\sigma,c} = -0.96,\\
    & a^{\mu,I} = -0.01,\; b^{\mu,I}=0.53,\; a^{\sigma,I} = 0.01,\; b^{\sigma,I}=0.92, 
\end{align*}
and the correlation matrix is given by
\begin{align*}
\Sigma = 
    \begin{pmatrix}
    1 & 0.2 & 0.8 & 0.8 \\
    0.2 & 1 & 0.2 & 0.7 \\
    0.8 & 0.2 & 1 & 0.7 \\
    0.8 & 0.7 & 0.7 & 1
    \end{pmatrix}. 
\end{align*}
The time horizon of the investor and the government is equal to one year, i.e $T=1$. We define a so-called reference case, which is a reference to analyze the impact of our incentives policy. In this setting, 
\begin{align*}
    \nu = \gamma = 1,\;  G=\mathbf{0}_{d^g},\; \kappa=0,\; \beta=(0.4,0.4,0.4,0.4),\;  \alpha=(0.2,0.2,0.3,0.5).
\end{align*}
Thus, the investor and the government have the same risk aversion, and the government has no specific incentives to increase the investments in the green bond. The only objective of the government is to maximise the value of the portfolio of bonds. The investor has the same cost intensity for every bonds and wishes to invest more in the index and the second conventional bond compared to the green and the first conventional bond. This corresponds to a risk-averse investor who prefers a diversified portfolio of conventional bonds, and is reluctant to invest in the green bonds. Finally, the utility reservation of the investor is set equal to the his utility in the case $\xi=0$. 

\medskip
We summarise the important empirical findings coming from the numerical results. 
\begin{itemize}
    \item The methodology we propose outperforms significantly the current tax-incentives policy: for a same result in terms of green investments, our methodology leads to a value of the portfolio process $15\%$ to $20\%$ higher. 
    \item The optimal investment policy is robust to model specification: by using a one-factor model on the short-term rates of the green bond, we observe that the investor's strategy oscillates slightly around the one obtained with deterministic rates. 
    \item The optimal controls show a rather constant behaviour throughout the year: The government does not have to frequently recalibrate the optimal contract. 
    \item The government can increase the amount invested in the green bonds by the mean of $G$ and $\kappa$. This decreases his utility as he must provides higher incentives to the investor.
    \item The most important incentive with respect to the contractible variables is $Z^\star_X$: The government always encourage a higher value of the portfolio of bonds by setting $Z^\star_X >0$. 
    \item When the government provides incentives to increase the investment in green bonds, he encourages higher variations of the value of the portfolio in order to compensate the amount given to the investor.
    \item At the expense of some substantial utility loss, the government can propose a contract indexed only on the contractible variables. This results in a higher incentive $Z^\star_X$. 
\end{itemize}
We also provide some general remarks for the policy-maker.
\begin{itemize}
    \item The parameters $(\alpha,\beta)$ modelling the preferences of the investor should be calibrated using the historical data on the issuance of bonds. For example, for $i\in \{1,\dots,d^g\}$, the coefficient $\alpha_i$ should be equal to the historical amount invested in the green bond $P_i^{g}$, and $\beta_i$ should be equal to the variance of the amount invested in this green bond throughout the year. Note however that one historical data on bonds with the same characteristic may not be available especially for countries with small amounts issued. Thus, the parameters $(\alpha,\beta)$ might be re-scaled depending on the maturity and the coupon of the newly issued bond: A risky investor such as a fixed-income hedge fund might increase his investment in the bond if it offers a higher coupon, whereas institutional investors such as pension funds will tend to buy bonds with a better rating. 
    \item The risk-aversion parameter $\gamma$ should be chosen such that, in the case $\xi=0$ and with $(\alpha,\beta)$ chosen as explained previously, the optimal controls $\pi^\star$ correspond roughly to the historical positions of the investor. 
    \item The risk-aversion parameter $\nu$ should be chosen heuristically such that the optimal contract offered to the investor bring the investments closer to the target $G$ and the amount $\xi^\star$ offered by the government is reasonable. The terms `closer to' and `reasonable' have to be interpreted by the policy-maker in view of their own budget constraints and political objectives. 
    \item In the case of a small number of bonds issued, the government can, for sake of simplicity, propose a contract indexed only on the value of the portfolio. 
\end{itemize}

\subsection{Reference case}

\subsubsection{Optimal controls and comparison with the no-contract case}

In the absence of a contract, that is $\xi=0$, the investor matches his investments $\pi^\star(\xi)$ with the target $\alpha$ as he has no incentives to deviate. Thus, the optimal investments are given by $\pi^{\star g}(0) = 0.2, \;  \pi^{\star c}(0) = (0.2,0.3),\; \pi^{\star I}(0) = 0.5$. We can now analyze the influence of the contract on the behaviour of the investor. We first show in \Cref{fig1} the evolution of the optimal investment policy $\pi^\star$ and the optimal incentives $Z^\star,$ and $\Gamma^\star$ through time. One can see that, even if the risk premia, the short-term rates and volatility processes have a deterministic affine structure with respect to time, the processes $(\pi^\star,Z^\star,\Gamma^\star)$ show a rather constant behaviour through the year. Thus, the optimal contract does not need frequent recalibration through the year.

\begin{figure}[H]
\centering
\begin{minipage}[c]{.46\linewidth}
  \begin{center}
      \includegraphics[width=0.8\textwidth]{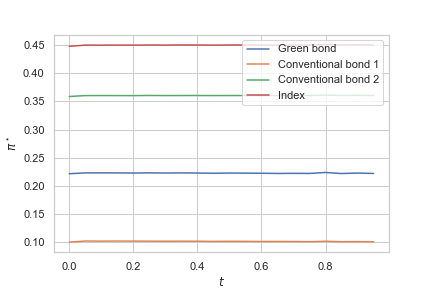}
      \vspace{-3mm}
    \end{center}
\end{minipage} \hfill
\begin{minipage}[c]{.46\linewidth}
   \begin{center}
       \includegraphics[width=0.8\textwidth]{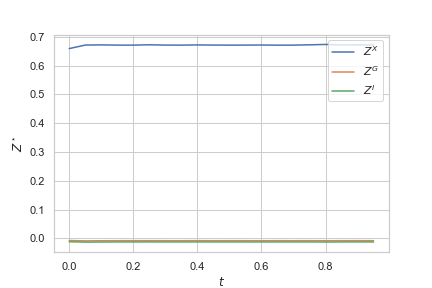}
       \vspace{-3mm}
     \end{center}
  \end{minipage} \\
  \vspace{-3mm}
\begin{minipage}[c]{.46\linewidth}
   \begin{center}
       \includegraphics[width=0.8\textwidth]{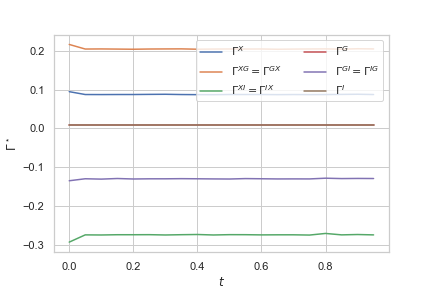}
       \vspace{-3mm}
     \end{center}
  \end{minipage}
\caption{\small Optimal investment policy (upper left), optimal incentives $Z^\star$ (upper right) and $\Gamma^\star$ (bottom) as a function of time.}
\label{fig1}
\end{figure}
Compared to the case $\xi=0$, we observe that the contract increases the investment in the green bond and the second conventional bond, while reducing the investment in the index and the first conventional bond. 
Given the dynamics of the bonds described previously, as well as the preferences of the investor, it is natural that he invests mostly in the index and the second conventional bond. As the green bond has a higher short-term rate and risk premium than the first conventional bond, the traders invests a higher part of his wealth in it.

\medskip
The optimal incentives with respect to the sources of risk is as follow: the incentives with respect to the green bond and the index of conventional bonds are set to zero, whereas the incentive with respect to the value of the portfolio of bonds is strictly positive. Thus, the government provides incentives only to increase the value of the portfolio. We observe at the bottom of \Cref{fig1} the incentives with respect to the quadratic variations of the contractible variables. The government provides no incentives with respect to the quadratic variation of the index and the green bond while it encourages a high quadratic variation of the portfolio process. The incentives with respect to the quadratic covariations are as follow: the government penalises a high covariation between the portfolio process and the index as well as between the green bonds and the index, while encouraging a high covariation between the portfolio and the green bond. 

\subsubsection{Trajectory simulation and portfolio value}

To illustrate the benefits of the use of a contract, we plot in \Cref{fig_compX_01} some simulations of the evolution of the portfolio process over the year with and without contract (that is when $\xi=0$). We observe that the portfolio process is higher when the government provides a contract to the investor. This is also illustrated in Figure \ref{fig_compX_01_bis} where we show the cumulated difference between the portfolio processes with and without contract, using $10000$ simulations.

\begin{figure}[H]
  \begin{center}
      \includegraphics[width=0.8\textwidth]{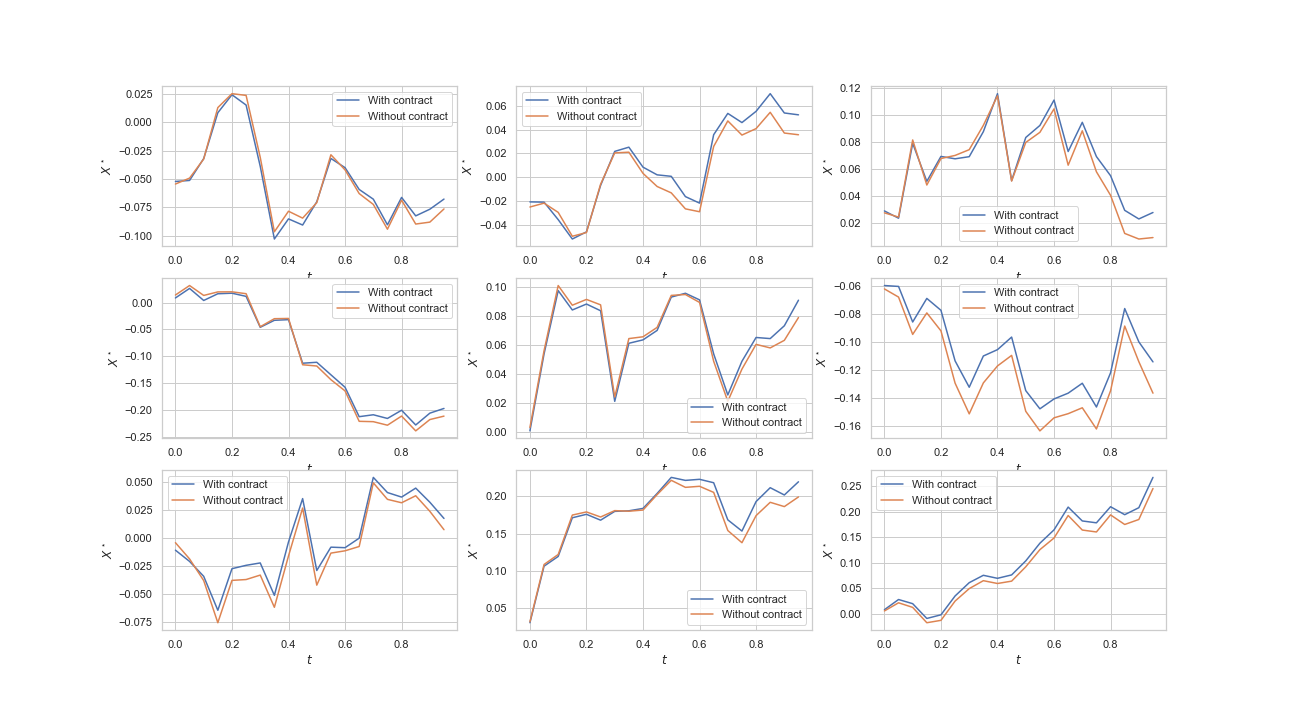}
      \vspace{-3mm}
      \caption{\small Some trajectories of the optimal portfolio process with and without contract.}
      \label{fig_compX_01}  
    \end{center}
\end{figure}

\begin{figure}[H]
  \begin{center}
      \includegraphics[width=9cm,height=5cm]{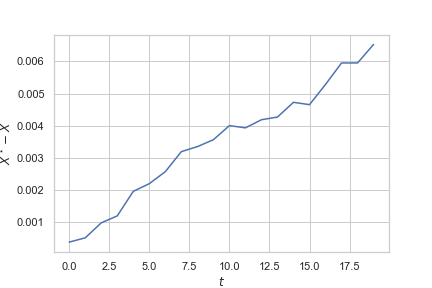}
      \vspace{-3mm}
      \caption{\small Average absolute difference of portfolio value over time, for $10000$ simulations.}
      \label{fig_compX_01_bis} 
    \end{center}
\end{figure}

\subsubsection{Optimal contract with no indexation on quadratic variation}

As the notion of incentives with respect to quadratic variation might not be easy to understand, we present in \Cref{fig2} the optimal investment and incentives when the government set $\Gamma=0$.

\begin{figure}[H]
\centering
\begin{minipage}[c]{.46\linewidth}
  \begin{center}
      \includegraphics[width=0.8\textwidth]{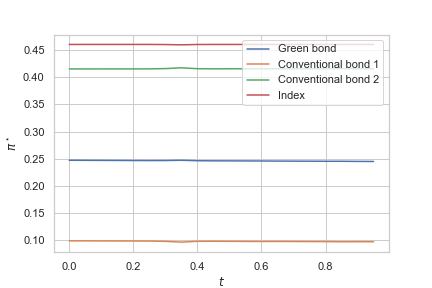}
      \vspace{-3mm}
    \end{center}
\end{minipage} \hfill
\begin{minipage}[c]{.46\linewidth}
   \begin{center}
       \includegraphics[width=0.8\textwidth]{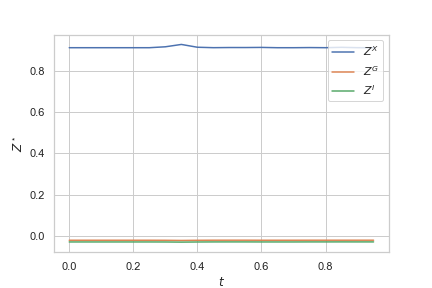}
       \vspace{-3mm}
     \end{center}
  \end{minipage}
\caption{\small Optimal investment policy (left) and optimal incentives $Z^\star$ (right) as a function of time.}
\label{fig2}  
\end{figure}

Compared to \Cref{fig1}, we observe that the government sets a higher incentive on the value of the portfolio, while the optimal investment policy is slightly higher on every asset, but not materially different compared to a framework with an optimal contract depending on both the dynamics and the quadratic variations of the contractible variables. Thus, for sake of simplicity, a government can build an optimal incentives scheme based only on the dynamics of the green bonds, the value of the portfolio and the index of conventional bonds.

\subsubsection{Model robustness}

We show that, using a more complex model for the short-term rates of the green bonds, the results are qualitatively the same. Using the methodology in \Cref{sec_stoch_rates}, we assume that the short-term rate of the green bond is driven by a one-factor stochastic model, that is
\begin{align}\label{taux_sto_ex}
    \d r_t^g = \theta^g(m^g - r_t^g) \d t + \sigma^g \d W_t^{g,r},
\end{align}
where $W^{g,r}$ is a one-dimensional Brownian and $(\theta^g,m^g,\sigma^g)\in \mathbb{R}^3_+$. Using a least-square algorithm, a calibration on the short-term rate curve of the green bond gives the following parameters
\begin{align*}
    \theta^g = 0.4, \; m^g = 0.04, \; \sigma^g = 0.02.
\end{align*}
We show in \Cref{fig_stoc_rates} the optimal investment policy when the short-term rate of the green bond is driven by \eqref{taux_sto_ex}. This is obtained by solving the $4$-dimensional HJB equation \eqref{hjb_stoc_rates} using a fully implicit scheme and locally unidimensional methods on sparse grids.\footnote{In particular, as the bond prices do not vary drastically during the year, we use $10$ time steps, $40$ space steps for the cash process, $10$ for the stochastic rate and $20$ for the risk factors of the green bond and index of conventional bonds.} Note that the optimisation is much harder to complete since for every $\pi^{\star}(t,z,g,r^g)$ we have to solve a $4$-dimensional HJB equation and iterate until we find the optima $(z^\star,\Gamma^\star)$. We observe that the optimal policy oscillates around the values obtained in the case of deterministic short-term rates in \Cref{fig1}. As the bonds are all positively but not perfectly correlated, a change of investment in the green bond induces a change of smaller magnitude in the other bonds. The magnitude of oscillation around the value with deterministic rates is not high, thus we observe same results from a qualitative point of view. As the use of stochastic rates can only be viable for a small portfolio of bonds, and as the difference of behaviour is negligible, we can argue that the use of deterministic short-term rates is more suited to practical applications. 

\begin{figure}[H]
  \begin{center}
      \includegraphics[width=9cm,height=5cm]{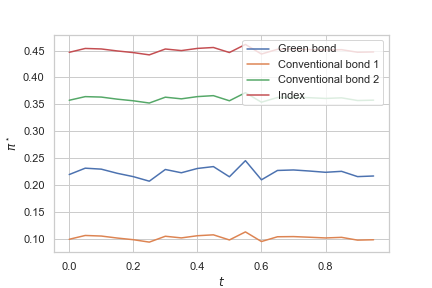}
      \vspace{-3mm}
      \caption{\small Optimal investment policy with stochastic rates.}
      \label{fig_stoc_rates} 
    \end{center}
\end{figure}

\subsubsection{Comparison with current tax-incentives policy}

The purpose of the paper is to show that a form of incentives based on the value of the portfolio and the prices of the bonds performs better than the current tax-incentives policy. As stated in the introduction, the incentives policy to increase investment in green bonds takes the form of tax credit or cash rebate, depending on the amount invested. Thus, in our Principal-Agent framework, it takes the following form
\begin{align*}
    \xi = c\int_0^T \sum_{i=1}^{d^g} \pi_t^g \d t,
\end{align*}
where $c>0$ is the amount of cash rebate or tax credit, controlled by the government. We choose $c$ so that the amount invested in green bonds is the same as in \Cref{fig1}. In \Cref{fig_comp_refcase_avg}, we plot the average relative difference between the cash processes of the government using our optimal policy and the actual tax-incentives. We observe that the difference increases with time, thus for a same result in terms of green investments our optimal contract increases its utility compared to the actual incentives policy. 

\begin{figure}[H]
  \begin{center}
      \includegraphics[width=9cm,height=5cm]{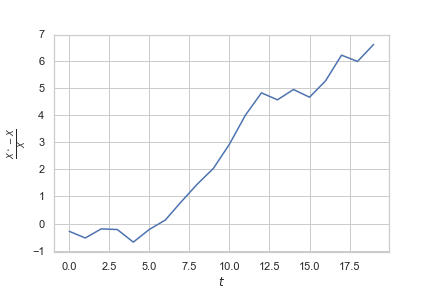}
      \vspace{-3mm}
      \caption{\small Average relative difference (in \%) of portfolio value over time for $10000$ simulations.}
      \label{fig_comp_refcase_avg} 
    \end{center}
\end{figure}

We also show in \Cref{fig_comp_refcase} some trajectories of the value of the portfolio process with the optimal contract and the optimal policy. We observe that the value of the portfolio process is always (slightly) higher in the presence of the optimal contract. In the next subsection we show that when the government wants to achieve a specific target in green investments, the difference between the two policies becomes larger

\begin{figure}[H]
  \begin{center}
      \includegraphics[width=0.8\textwidth]{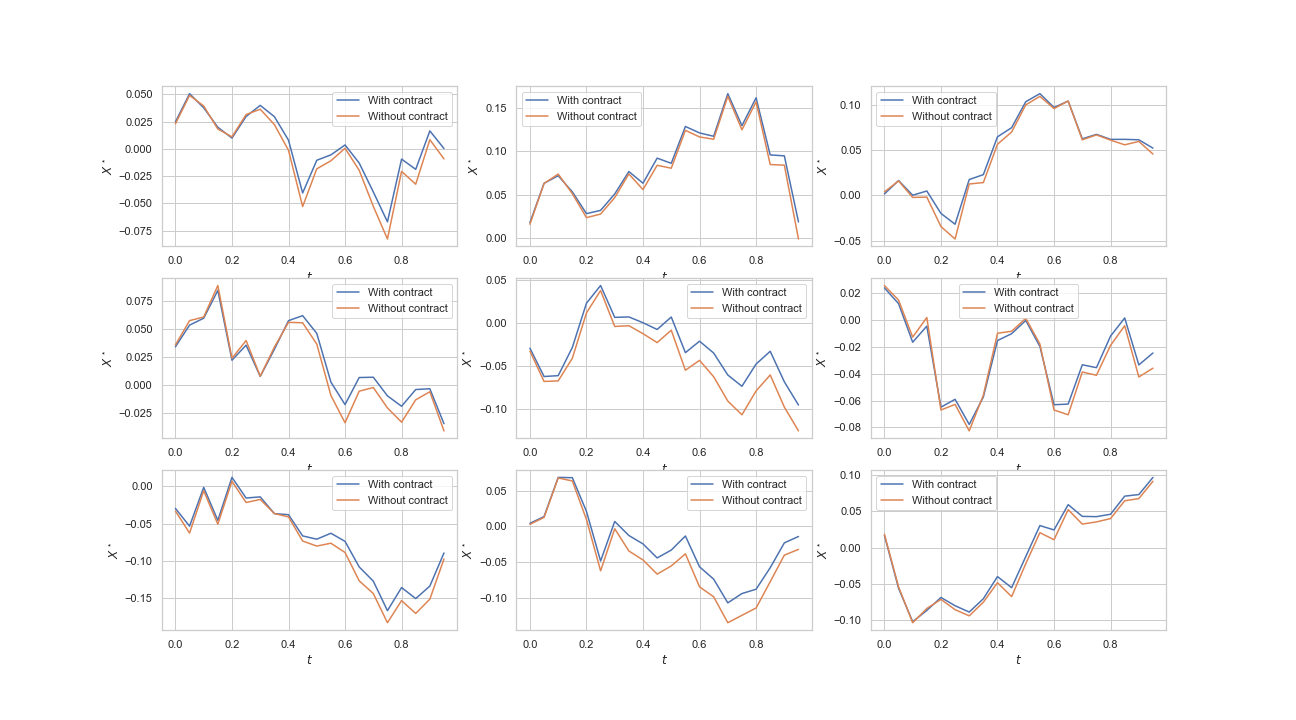}
      \vspace{-3mm}
      \caption{\small Some trajectories of the optimal portfolio process with the optimal contract and with the tax-incentives policy (labeled `without contract').}
      \label{fig_comp_refcase} 
    \end{center}
\end{figure}

\subsection{Influence of the green target}

\subsubsection{Comparison with the reference case}

We now study the impact of the incentives policy we propose when the government seeks to achieve a specific investment target in the green bond.  We take $G=3,$ $\kappa=0.8$ and present in \Cref{fig3} the new optimal controls of the investor and the government.

\begin{figure}[H]
\centering
\begin{minipage}[c]{.46\linewidth}
  \begin{center}
      \includegraphics[width=0.8\textwidth]{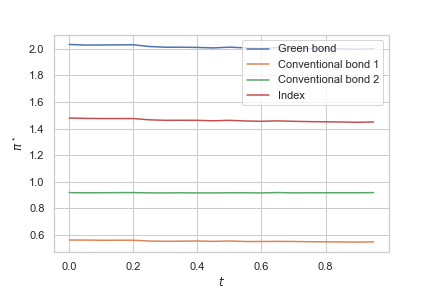}
      \vspace{-3mm}
    \end{center}
\end{minipage} \hfill
\begin{minipage}[c]{.46\linewidth}
   \begin{center}
       \includegraphics[width=0.8\textwidth]{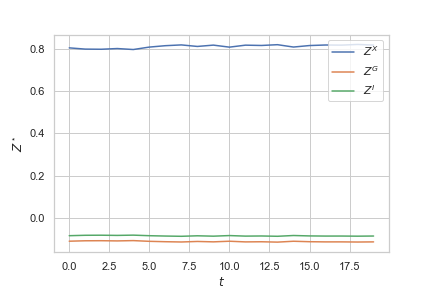}
       \vspace{-3mm}
     \end{center}
  \end{minipage} \\
  \vspace{-3mm}
\begin{minipage}[c]{.46\linewidth}
   \begin{center}
       \includegraphics[width=0.8\textwidth]{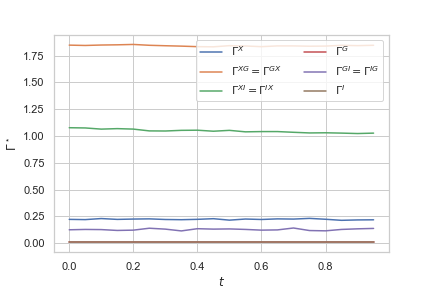}
       \vspace{-3mm}
     \end{center}
  \end{minipage}
\caption{\small Optimal investment policy (upper left), optimal incentives $Z^\star$ (upper right) and $\Gamma^\star$ (bottom) as a function of time.}
\label{fig3}
\end{figure}

The behaviour of the investor is drastically different compared to \Cref{fig1}. He now invests mostly in the green bond, while increasing the amount invested in the other assets. This comes from the fact that all assets are positively correlated so that the additional amount invested in the index is higher than the one invested in the first conventional bond. The government sets a higher incentive with respect to the value of the portfolio. The incentives with respect to the quadratic variation are now all positive and higher than in \Cref{fig1}. While $\Gamma_G$ and $\Gamma_I$ are still set to zero, the incentive with respect to every covariations are now positive. In particular, $\Gamma_{XI}$ has changed from $-0.3$ to $1$ meaning that the government encourages a higher quadratic covariation between the portfolio and the index of conventional bonds. So as to maximise the value of the portfolio while giving higher incentives, the government encourages a higher variance of the portfolio process and positive co-variations between the portfolio and the bond prices. 

\medskip
Note that, while the amount invested in the green bond is higher but not equal to the target of the government. As $\alpha^g = 0.2$, the government has to provide higher incentives to force the investor to shift his preferences toward a much higher investment in the green bond. As in the reference case, we show in \Cref{fig_compX_02_bis} some simulations of the evolution of the portfolio process compared to the case without contract. We observe that the higher investment in green bonds leads to a higher average value of the portfolio process. Moreover due to the higher incentives on the quadratic variations, the portfolio process with the contract is more volatile, as it can be seen in \Cref{fig_compX_02}

\begin{figure}[H]
  \begin{center}
      \includegraphics[width=8cm,height=4cm]{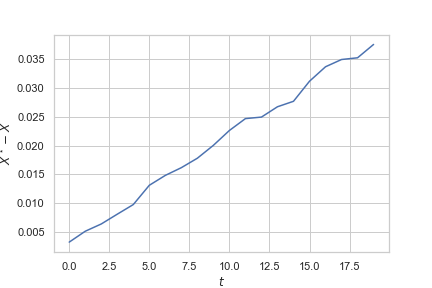}
      \vspace{-3mm}
      \caption{\small Average difference of portfolio value over time, for $10000$ simulations.}
      \label{fig_compX_02_bis} 
    \end{center}
\end{figure}

\begin{figure}[H]
  \begin{center}
      \includegraphics[width=0.8\textwidth]{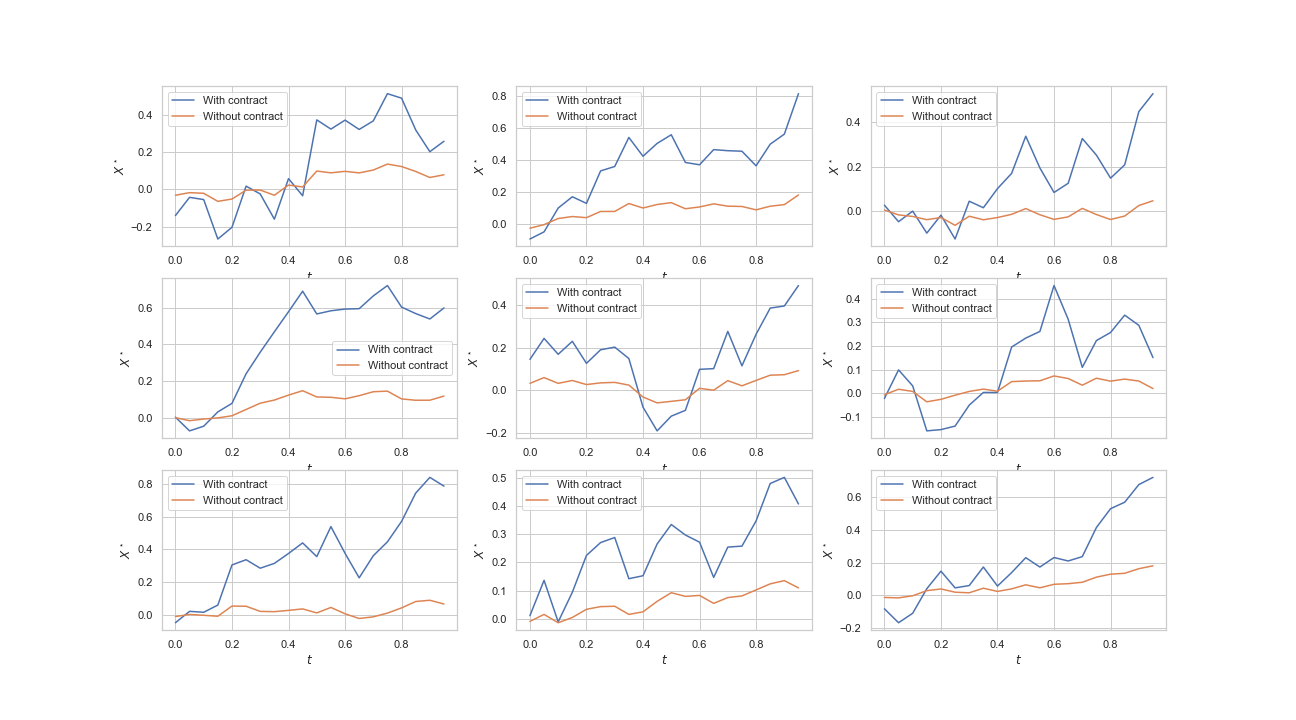}
      \vspace{-3mm}
      \caption{\small Some trajectories of the optimal portfolio process with and without contract.}
      \label{fig_compX_02}  
    \end{center}
\end{figure}

\subsubsection{Comparison with the tax-incentives policy}

We have seen in Figures \ref{fig_comp_refcase_avg} and \ref{fig_comp_refcase} that without specific target in green investments, the optimal contract we propose leads to a higher value of the portfolio process compared to the tax-incentives policy. Here, we set the tax-incentives $c$ so that the investor matches the investment in green bonds obtained with the optimal contract in \Cref{fig3}. We plot in \Cref{fig_comp03sim}, and \Cref{fig_comp03_bis} some trajectories and the average relative difference of cash processes obtained with the optimal contract and the tax-incentives policy. 

\medskip
In this case, the relative differences of value are much higher compared to \Cref{fig_comp_refcase_avg}, and \Cref{fig_comp_refcase}. Thus, if the government has a specific investment target in green bonds, the use of the optimal contract we propose guarantees a much higher value of the portfolio for a similar result than the tax-incentives policy. 

\begin{figure}[H]
  \begin{center}
      \includegraphics[width=0.8\textwidth]{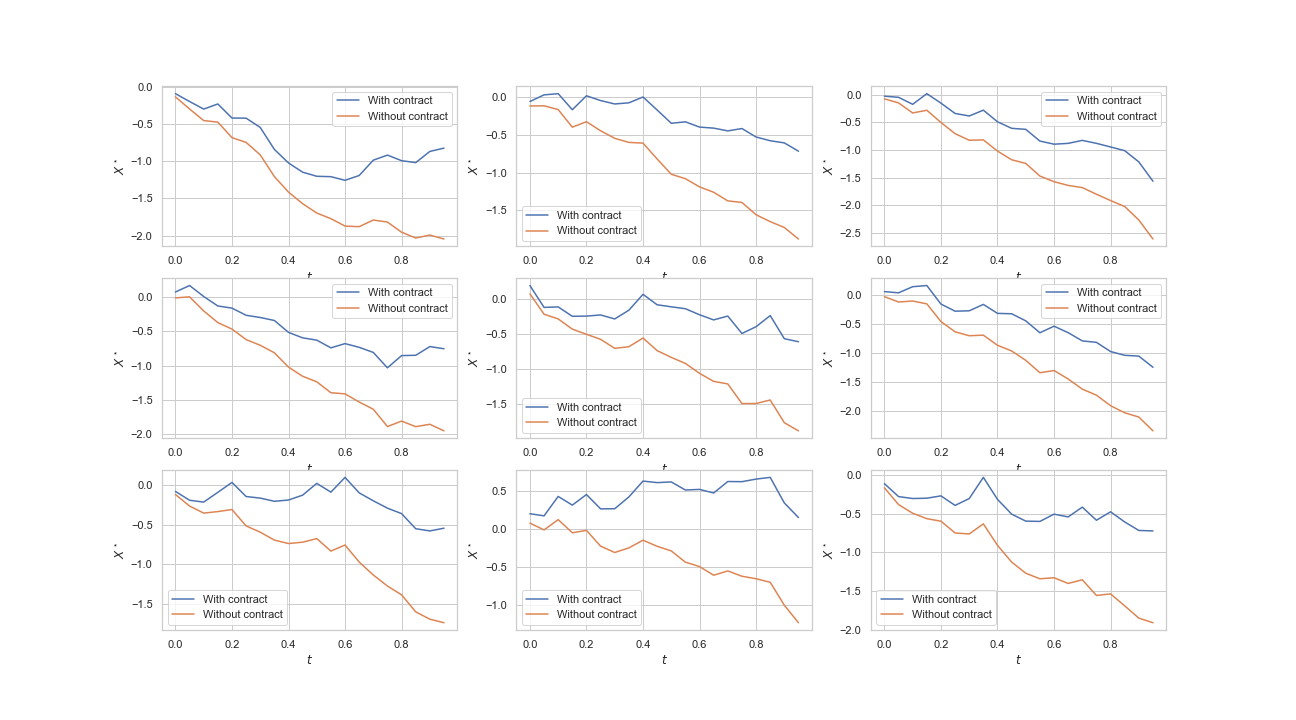}
      \vspace{-3mm}
      \caption{\small Some trajectories of the optimal portfolio process with the optimal contract and with the tax-incentives policy (labeled `without contract').}
      \label{fig_comp03sim}  
    \end{center}
\end{figure}

\begin{figure}[H]
  \begin{center}
      \includegraphics[width=9cm,height=5cm]{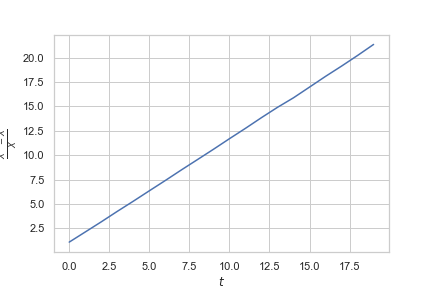}
      \vspace{-3mm}
      \caption{\small Average relative difference (in \%) of portfolio value over time (with the optimal contract and tax-incentives policy), for $10000$ simulations.}
      \label{fig_comp03_bis} 
    \end{center}
\end{figure}

\subsection{Sensitivity analysis}

\subsubsection{Influence of \texorpdfstring{$G$}{G}, and \texorpdfstring{$\kappa$}{k}}

In \Cref{fig4}, we show that reducing the value of $\kappa$ makes the government target harder to achieve. 

\begin{figure}[H]
\centering
\begin{minipage}[c]{.46\linewidth}
  \begin{center}
      \includegraphics[width=0.8\textwidth]{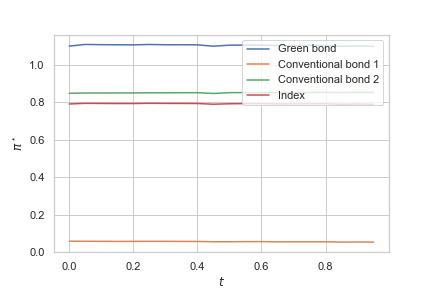}
      \vspace{-3mm}
    \end{center}
\end{minipage} \hfill
\begin{minipage}[c]{.46\linewidth}
   \begin{center}
       \includegraphics[width=0.8\textwidth]{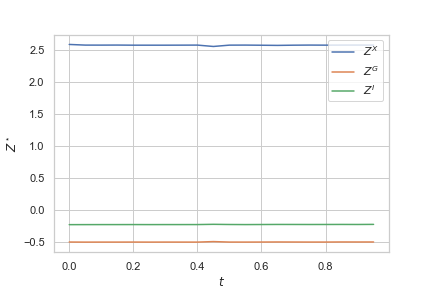}
       \vspace{-3mm}
     \end{center}
  \end{minipage} \\
  \vspace{-3mm}
\begin{minipage}[c]{.46\linewidth}
   \begin{center}
       \includegraphics[width=0.8\textwidth]{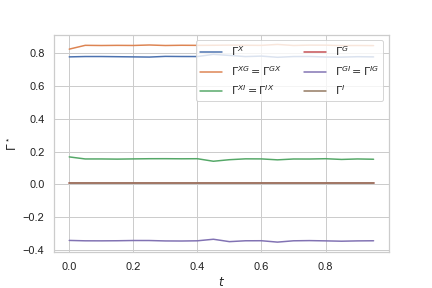}
       \vspace{-3mm}
     \end{center}
  \end{minipage}
\caption{\small Optimal investment policy (upper left), optimal incentives $Z^\star$ (upper right) and $\Gamma^\star$ (bottom) as a function of time.}
\label{fig4}
\end{figure}
In particular, we observe that the amount invested in all the assets has been reduced and especially the amount invested in the green bond. In this case, the government proposes a much higher incentive with respect to the dynamics of the portfolio compared to \Cref{fig3}: as the investment target $G$ is less important (because of a lower $\kappa$, he aims at maximising the value of the portfolio  Moreover, a high quadratic covariation between the green bond and the index is now penalised, while a high variance of the portfolio is encouraged in order to maximise its value.  

\medskip
In \Cref{fig5}, we show that with the parameters $\kappa=0.8$, $G=1$, the investment target of the government can be reached more easily. In this case, the trader invest roughly the same amount in the the green bond and the second conventional bond. The government increases the incentive corresponding to the value of the portfolio compared to \Cref{fig3}. Moreover, he encourages a high variance of the portfolio process while keeping the incentives $\Gamma_G$, $\Gamma_I$ equal to zero. 

\begin{figure}[H]
\centering
\begin{minipage}[c]{.46\linewidth}
  \begin{center}
      \includegraphics[width=0.8\textwidth]{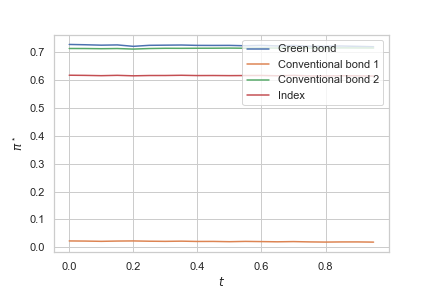}
      \vspace{-3mm}
    \end{center}
\end{minipage} \hfill
\begin{minipage}[c]{.46\linewidth}
   \begin{center}
       \includegraphics[width=0.8\textwidth]{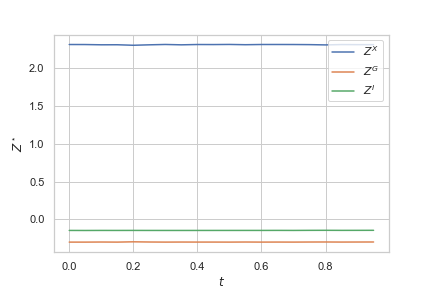}
       \vspace{-3mm}
     \end{center}
  \end{minipage} \\
  \vspace{-3mm}
\begin{minipage}[c]{.46\linewidth}
   \begin{center}
       \includegraphics[width=0.8\textwidth]{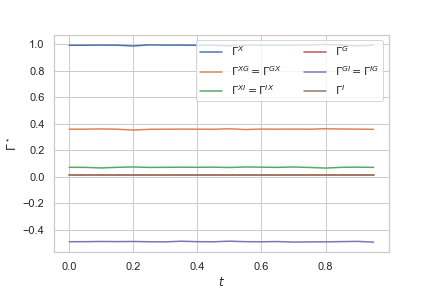}
       \vspace{-3mm}
     \end{center}
  \end{minipage}
\caption{\small Optimal investment policy (upper left), optimal incentives $Z^\star$ (upper right) and $\Gamma^\star$ (bottom) as a function of time}
\label{fig5}
\end{figure}

\subsubsection{Influence of \texorpdfstring{$\alpha$}{a} and \texorpdfstring{$\beta$}{b}}

We studied in the previous section the influence of the government's parameters, that is the target $G$ and the cost intensity $\kappa$. We now show the influence of the targets $\alpha^g,\alpha^c,\alpha^I$ and the cost intensities $\beta^g,\beta^c,\beta^I$ of the investor. In \Cref{fig6}, we place ourselves in the context of the reference case of \Cref{fig1}, except that we set $\alpha^g=0$. This means that the investor is not willing to put money in the green bond. Compared to \Cref{fig1}, we see that in the absence of specific incentives for green investing, the investor effectively sets $\pi^g$ equal to zero. 

\medskip
The other investment policies are slightly changed, as there is now more investment in the second conventional bond than in the index. As neither the government nor the investor are interested in the green bond, the  government provides higher incentives $Z_X$ in order to maximise the value of the portfolio. The incentive $\Gamma_{XG}$ become negative while $\Gamma_X$ becomes positive meaning that the government encourages opposite moves between the price of the green bond and the portfolio process. Moreover, $\Gamma_{XI}$ becomes positive: the government encourages similar moves between the price of the index and the portfolio process. Finally, the incentives corresponding to the quadratic variation of the green bond and the index remain equal to zero. 

\begin{figure}[H]
\centering
\begin{minipage}[c]{.46\linewidth}
  \begin{center}
      \includegraphics[width=0.8\textwidth]{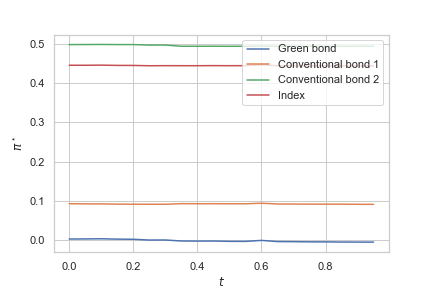}
      \vspace{-3mm}
    \end{center}
\end{minipage} \hfill
\begin{minipage}[c]{.46\linewidth}
   \begin{center}
       \includegraphics[width=0.8\textwidth]{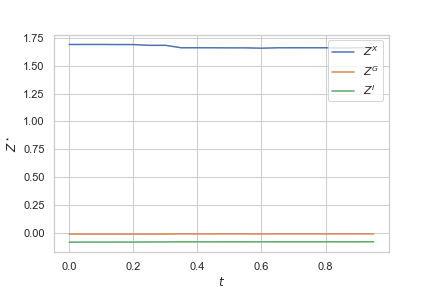}
       \vspace{-3mm}
     \end{center}
  \end{minipage} \\
  \vspace{-3mm}
\begin{minipage}[c]{.46\linewidth}
   \begin{center}
       \includegraphics[width=0.8\textwidth]{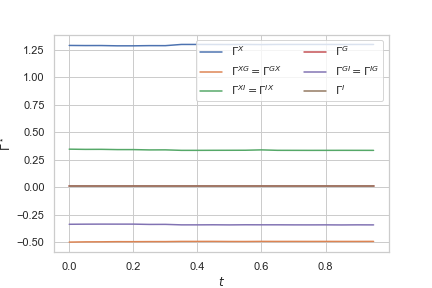}
       \vspace{-3mm}
     \end{center}
  \end{minipage}
\caption{\small Optimal investment policy (upper left), optimal incentives $Z^\star$ (upper right) and $\Gamma^\star$ (bottom) as a function of time.}
\label{fig6}
\end{figure}

In \Cref{fig7}, we compare these results with the case $G=3,\kappa=0.8$ in order to show the influence of the contract when the investor and the government have very different investment targets. We observe that the amount invested in the green bond is clearly higher than in Figure \ref{fig6} where the investor has $\alpha^g=0$ but lower than in Figure \ref{fig3} where the investor has $\alpha^g=0.2$. The incentives with respect to the quadratic variations become positive meaning that the government encourages similar moves of all the contractible variables. In particular, compared to Figure \ref{fig6}, the government gives higher incentives toward similar moves of the portfolio value and the green bond. 

\medskip
We conclude this section by showing in \Cref{fig8} the influence of the cost intensity. We take the same parameters as in \Cref{fig7} except that we set $\beta^g=0.5$. As the intensity cost for moving the green bond target of the investor is higher than in \Cref{fig7}, the optimal investment policy in the green bond is lower. The government sets a higher incentive $Z_X$ to encourage a higher value of the portfolio. The incentives with respect to quadratic variations are materially different compared to Figure \ref{fig7}. In particular, the government encourages opposite moves between the green bond and the index. 

\begin{figure}[H]
\centering
\begin{minipage}[c]{.46\linewidth}
  \begin{center}
      \includegraphics[width=0.8\textwidth]{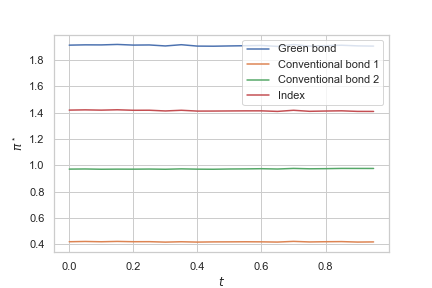}
      \vspace{-3mm}
    \end{center}
\end{minipage} \hfill
\begin{minipage}[c]{.46\linewidth}
   \begin{center}
       \includegraphics[width=0.8\textwidth]{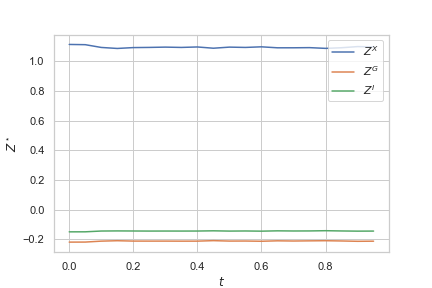}
       \vspace{-3mm}
     \end{center}
  \end{minipage} \\
  \vspace{-3mm}
\begin{minipage}[c]{.46\linewidth}
   \begin{center}
       \includegraphics[width=0.8\textwidth]{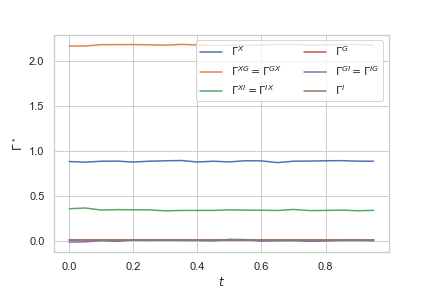}
       \vspace{-3mm}
     \end{center}
  \end{minipage}
\caption{Optimal investment policy (upper left), optimal incentives $Z^\star$ (upper right) and $\Gamma^\star$ (bottom) as a function of time}
\label{fig7}
\end{figure}

\begin{figure}[H]
\centering
\begin{minipage}[c]{.46\linewidth}
  \begin{center}
      \includegraphics[width=0.8\textwidth]{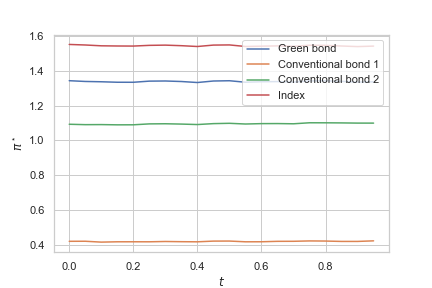}
      \vspace{-3mm}
    \end{center}
\end{minipage} \hfill
\begin{minipage}[c]{.46\linewidth}
   \begin{center}
       \includegraphics[width=0.8\textwidth]{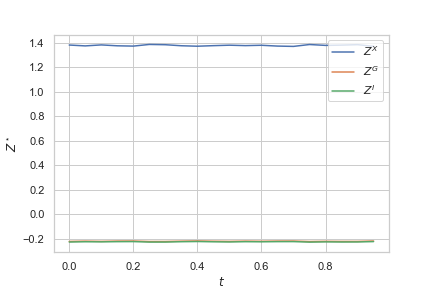}
       \vspace{-3mm}
     \end{center}
  \end{minipage} \\
  \vspace{-3mm}
\begin{minipage}[c]{.46\linewidth}
   \begin{center}
       \includegraphics[width=0.8\textwidth]{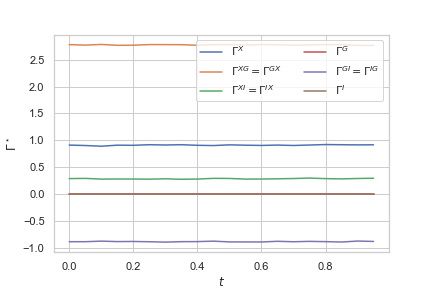}
       \vspace{-3mm}
     \end{center}
  \end{minipage}
\caption{Optimal investment policy (upper left), optimal incentives $Z^\star$ (upper right) and $\Gamma^\star$ (bottom) as a function of time}
\label{fig8}
\end{figure}

\begin{appendix}

\section{Weak formulation of the problem}\label{sec_weak_formulation}
We work on the canonical space $\mathcal{Q}$ of continuous functions on $[0,T]$ with Borel algebra $\mathcal{F}$. The $(d^g+d^c+2)$-dimensional
canonical process is 
\begin{align*}
    \mathcal{B:}= 
    \begin{pmatrix}
    X \\
    W^g \\
    W^c \\
    W^I
    \end{pmatrix}
\end{align*}
and $\mathbb{F}=(\mathcal{F}_t)_{t\in [0,T]}$ is its natural filtration. We define $\mathbb{P}_0$ as the $(d^g+d^c+1)$-dimensional Wiener measure on $\mathcal{Q}$. Thus, $\mathcal{B}$ is a $(d^g+d^c+2)$-dimensional Brownian motion where $(W^g,W^c,W^I)$ has a correlation matrix $\Sigma$ under $\mathbb{P}_0$. We also define $\mathcal{M}(\Omega)$ as the set of probability measures on $(\mathcal{Q},\mathcal{F}_T)$ and  
\begin{align*}
    \mathbb{H}^2(\mathbb{P}_0):=\bigg\{(\pi_t)_{t\in[0,T]}:B\text{\rm -valued}, \mathbb{F}\text{-predictable processes such that }  \mathbb{E}^{\mathbb{P}}\bigg[\int_0^T\|\pi_t\|_2^2 \d t \bigg] <+\infty\bigg\}. 
\end{align*}
We consider the following family of processes, indexed by $\pi\in\mathbb{H}^2(\mathbb{P}_0)$
\begin{align*}
    \mathcal{X}_t^\pi := 
    \begin{pmatrix}
    \int_0^t \Sigma^{\text{\rm obs}}(s,\pi_s) \d \mathcal{B}_s \\
    \int_0^t \Sigma^{\text{\sout{\rm obs}}} \d \mathcal{B}_s
    \end{pmatrix},
\end{align*}
and define the set $\mathcal{P}_m$ as the set of probability measures $\mathbb{P}^\pi \in \mathcal{M}(\mathcal{Q})$ of the form
\begin{align*}
    \mathbb{P}^\pi = \mathbb{P}_0 \circ (\mathcal{X}^\pi)^{-1}, \; \text{for all } \pi\in \mathbb{H}^2(\mathbb{P}_0).
\end{align*}
Thanks to \citeauthor*{bichteler1981stochastic} \cite{bichteler1981stochastic}, we can define a pathwise version of the quadratic variation process $\langle \mathcal{B}\rangle$ and of its density process with respect to the Lebesgue measure
$\hat{\alpha}_t := \frac{d\langle \mathcal{B}\rangle_t}{\mathrm{d}t}$. As the processes $\pi\in\mathcal{A}\subset\mathbb{H}^2(\mathbb{P}_0)$ have all their coordinates strictly positive, the volatility of $\mathcal{B}$ is invertible, which implies in particular that the process $W_t = \int_0^t \hat{\alpha}_s^{-\frac{1}{2}}\d\mathcal{B}_s$
is an $\mathbb{R}^{d^g+d^c+2}$-valued, $\mathbb{P}$-Brownian motion with correlation matrix $\Sigma$ for every $\mathbb{P}\in \mathcal{P}_m$. According to \citeauthor*{soner2013dual} \cite{soner2013dual}, there exists an $\mathbb{F}^{\mathcal{B}}$-progressively measurable mapping $\beta_\pi: [0,T]\times \mathcal{Q} \longrightarrow \mathbb{R}^{d^g+d^c+2}$ such that
\begin{align*}
    \mathcal{B} = \beta_\pi(\mathcal{X}^\pi),\; \mathbb{P}_0\text{-a.s}, \; W = \beta_\pi(\mathcal{B}),\; \mathbb{P}^\pi\text{-a.s}, \; \hat{\alpha}(\mathcal{B})= \pi\big(\beta_\pi(\mathcal{B})\big),  \; \mathrm{d}t \otimes \d\mathbb{P}^\pi\text{-a.e.}
\end{align*}
In particular, the canonical process $\mathcal{B}$ admits the following dynamics for all $\pi\in\mathcal{A}$
\begin{align*}
    \mathcal{B}_t = 
    \begin{pmatrix}
    \int_0^t \Sigma^{\text{\rm obs}}(s,\pi(W_\cdot)) \d W_s \\
    \int_0^t \Sigma^{\text{\sout{\rm obs}}} \d W_s
    \end{pmatrix}, \; \mathbb{P}^{\pi}\text{-a.s}.
\end{align*}
The first coordinate of the canonical process is the desired output process, the $d^g$ next coordinates are the contractible sources of risk, that is the $d^g$ green bonds and the index of conventional bond, and the last $d^c$ coordinates are the non-contractible sources of risk. Then, we can introduce easily the drift of the output process by the means of Girsanov theorem. Denote 
\begin{align*}
    \frac{\d\mathbb{Q}}{\d\mathbb{P}^\pi} := \mathcal{E}\bigg(\int_0^\cdot \tilde{\Sigma}(s) \d W_s \bigg)_T,
\end{align*}
a change of measure independent of the control process $\pi$,
where $\tilde{\Sigma}: [0,T]\longrightarrow \mathcal{M}_{d^g+d^c+2}(\mathbb{R})$ is such that
\begin{align*}
    \tilde{\Sigma}(t) := 
    \begin{pmatrix}
     \big(\frac{r^g(t)+\eta^g(t)\circ\sigma^g(t)}{\sigma^g(t)}\big)^{\top} &  \big(\frac{r^c(t)+\eta^c(t)\circ\sigma^c(t)}{\sigma^c(t)}\big)^{\top} & \frac{\mu^I(t)}{\sigma^I(t)} \\
     \mathbf{0}_{d^g+d^c+1,d^g} & \mathbf{0}_{d^g+d^c+1,d^c} &\mathbf{0}_{d^g+d^c+1,1}
    \end{pmatrix}.
\end{align*}
We finally obtain the desired dynamics for the output process and the $d^g+d^c+1$ sources of risk. 

\section{Proof of Theorem \ref{thm_1_admissible_contract}}
We can define the functions $\sigma:[0,T]\times K \longrightarrow \mathcal{M}_{d^g+d^c+2,d^g+d^c+1}(\mathbb{R}), \lambda:[0,T]\longrightarrow \mathbb{R}^{d^g+d^c+1}$ such that the set of contractible variables $(B_t)_{t\in[0,T]}$ can be rewritten for all $\pi\in\mathcal{A}$ as
\begin{align}\label{controlled_output}
    \d B_t = \sigma(t,\pi_t) \big( \lambda(t) \d t + \d W_t\big),
\end{align}
where for all $(t,p)\in [0,T]\times K$,
\begin{align*}
    \sigma(t,p) := 
        \begin{pmatrix}
        \big(p^g \sigma^g(t)\big)^\top  & \big(p^c \sigma^c(t)\big)^{\top} & p^I \sigma^I(t) \\
        \text{diag}(\sigma^g(t)) & \mathbf{0}_{d^g,d^c} & \mathbf{0}_{d^g,1} \\
        \mathbf{0}_{1,d^g}& \mathbf{0}_{1,d^c} & \sigma^I(t) \\
        \mathbf{0}_{d^c,d^g} & \text{diag}(\sigma^c(t)) & \mathbf{0}_{d^c,1}
    \end{pmatrix},
    \lambda(t) := 
    \begin{pmatrix}
        \big(\frac{r^g(t)+\eta^g(t)\circ\sigma^g(t)}{\sigma^g(t)}\big)^{\top} &  \big(\frac{r^c(t)+\eta^c(t)\circ\sigma^c(t)}{\sigma^c(t)}\big)^{\top} & \frac{\mu^I(t)}{\sigma^I(t)}
    \end{pmatrix}^\top,
\end{align*}
Thanks to \Cref{assumption_bounded} and the definition of $\mathcal{A}$, the functions $\sigma$, and $\lambda$ are bounded. As the function $\sigma(t,\pi)$ is continuous in time for some constant control process $\pi\in\mathcal{A}$, there always exists a weak solution to \eqref{controlled_output}. Thanks to the boundedness of the function $\lambda$, we can use Girsanov's theorem which guarantees that every $\pi\in\mathcal{A}$ induces a weak solution for 
\begin{align*}
    B_t = B_0 + \int_0^t \sigma(s,\pi_s) \d W^{\prime}_s, \; \frac{\d\mathbb{P}^\prime}{\d\mathbb{P}}\bigg|_{\mathcal{F}_T} = \mathcal{E}\bigg(\int_0^\cdot \lambda(s)\cdot \d W_s \bigg)_T,
\end{align*}
where $W^{\prime}$ is a $\mathbb{P}^{\prime}$-Brownian motion. 

\medskip
The cost function $k: K\to \mathbb{R}$ is measurable and bounded by boundedness of the elements of $K$. We introduce the norms 
\begin{align*}
    \|Z^e\|_{\mathbb{H}^p}^p = \sup_{\pi\in\mathcal{A}}\mathbb{E}^{\pi}\bigg[\bigg( \int_0^T\Big|\tilde{\sigma}(t,\pi_t) Z_t\Big|^2 \d t\bigg)^{p/2}\bigg], \quad \|Y^e\|_{\mathbb{D}^p}^p = \sup_{\pi\in\mathcal{A}}\mathbb{E}^{\pi}\bigg[\sup_{t\in[0,T]} |Y_t|^p\bigg],
\end{align*}
for any $\mathbb{F}$-predictable, $\mathbb{R}^{d^g+d^c+2}$-valued process $Z^e$ and $\mathbb{R}$-valued process $Y^e$, and for all $(t,p)\in[0,T]\times K$ $\widetilde{\sigma}:[0,T]\times K\to \mathcal{M}_{d^g+d^c+2}(\mathbb{R})$ is such that
\begin{align*}
    \widetilde{\sigma}^2(t,\pi_t) = \sigma(t,p) \sigma^\top(t,p).
\end{align*}
We also define the functions $H^e: [0,T]\times\mathbb{R}^{d^g+d^c+2}\times\mathbb{S}_{d^g+d^c+2}(\mathbb{R})\times\mathbb{R}\longrightarrow \mathbb{R}$ and $h^e: [0,T]\times\mathbb{R}^{d^g+d^c+2}\times\mathbb{S}_{d^g+d^c+2}(\mathbb{R})\times\mathbb{R}\times K \longrightarrow \mathbb{R}$ as 
\begin{align*}
    & H^e(t,z,g,y):=\sup_{p\in K}h^e(t,z,g,y,p) \\
    & h^e(t,z,g,y,p):= -\gamma k(p)y + z\cdot \sigma(t,p)\lambda(t) + \frac{1}{2}\mathrm{Tr}\Big[g\sigma(t,p)\Sigma(\sigma(t,p))^\top\Big].
\end{align*}
We introduce the set of so-called admissible incentives $\mathcal{ZG}^e$ as the set of $\mathbb{F}$-predictable processes $(Z^e,\Gamma^e)$ valued in $\mathbb{R}^{d^g+d^c+2}\times\mathbb{S}_{d^g+d^c+2}(\mathbb{R})$ such that
\begin{align}\label{cond_integrability}
    \|Z^e\|_{\mathbb{H}^p}^p + \|Y^{e,Z^e,\Gamma^e}\|_{\mathbb{D}^p}^p <+\infty,
\end{align}
for some $p>1$ where for $y^e_0\in\mathbb{R}$,
\begin{align*}
    Y_t^{e, y_0^e, Z^e,\Gamma^e} := y_0^e + \int_0^t Z^e_s \d B_s + \frac{1}{2}\mathrm{Tr}\Big[\Gamma^e_s \d\langle B\rangle_s \Big] - H^e\big(s,Z_s^e,\Gamma_s^e,Y_s^{e, Z^e,\Gamma^e}\big) \d s.
\end{align*}
Condition \eqref{cond_integrability} guarantees that the process $(Y_t^{e, y_0^e, Z^e,\Gamma^e})_{t\in[0,T]}$ is well defined: provided that the right-hand side integrals are well defined, and by noting that $H^e$ is Lipschitz in its last variable (since the cost function $k$ is bounded), $(Y_t^{e, y_0^e, Z^e,\Gamma^e})_{t\in[0,T]}$ is the unique solution of an ODE with random coefficient. Moreover, as $K$ is a compact set and $h^e$ is continuous with respect to its last variable, the supremum with respect to $p$ is always attained. As $(Z^e,\Gamma^e)=(\mathbf{0}_{d^g+d^c+2},\mathbf{0}_{d^g+d^c+2,d^g+d^c+2})\in \mathcal{ZG}^e$, this set is non-empty and we are in the setting of \citeauthor*{cvitanic2018dynamic} \cite{cvitanic2018dynamic}. Using \cite[Proposition 3.3 and Theorem 3.6]{cvitanic2018dynamic}, we obtain that without reducing the utility of the Principal, any admissible contract admits the representation
\begin{align*}
    U_A(\xi) = Y_T^{e, Z^e,\Gamma^e},
\end{align*}
Define for all $t\in[0,T]$ the processes
\begin{align*}
    & Z_t =: -\frac{Z_t^e}{\gamma Y_t^{e,y_0^e, Z^e,\Gamma^e}}, \; \Gamma_t := -\frac{\Gamma_t^e}{\gamma Y_t^{e, y_0^e, Z^e,\Gamma^e}},\; Y_t^{y_0,Z,\Gamma} = y_0 + \int_0^T Z_s \d B_s + \frac{1}{2}\mathrm{Tr}\big[\big(\Gamma_s + \gamma Z_s Z_s^\top \d\langle B\rangle_s \big] - H\big(s,Z_s,\Gamma_s\big) \d s,
\end{align*}
where $H:[0,T]\times\mathbb{R}^{d^g+d^c+2}\times\mathbb{S}_{d^g+d^c+2}(\mathbb{R})\longrightarrow \mathbb{R}$ is defined by $H(t,z,g) = \sup_{p\in K} h(t,z,g,p)$ and
\begin{align}\label{zg_admissible}
\begin{split}
     \mathcal{ZG}:= \bigg\{(Z_t,\Gamma_t)_{t\in[0,T]}&: \mathbb{R}^{d^g+d^c+2}\times\mathbb{S}_{d^g+d^c+2}(\mathbb{R})\text{-valued, }\mathbb{F}\text{-predictable processes s.t}\\
    & \big(-\gamma Z_t U_A(Y_t^{y_0,Z,\Gamma}),-\gamma \Gamma_t U_A(Y_t^{y_0,Z,\Gamma})\big)_{t\in[0,T]}\in \mathcal{ZG}^e \bigg\}.
\end{split}
\end{align}
An application of It\=o's formula leads to $\xi = Y_T^{y_0,Z,\Gamma}$. Thus, we obtain the desired representation for admissible contracts and $V_A( Y_T^{y_0,Z,\Gamma})=U_A(y_0)$. The characterisation of $\mathcal{A}(Y_T^{y_0,Z,\Gamma})$ is a direct consequence of \citeauthor*{cvitanic2018dynamic} \cite[Proposition 3.3]{cvitanic2018dynamic}. 

\section{Green investments with stochastic interest rates}\label{sec_stoch_rates}

\subsection{Framework}

In the article, we considered a deterministic structure for the short-term rates. However, this omits some important stylised facts of the yield curve. In this section we show that at the expense of the use of stochastic control, the government can provide incentives based on short-term rates following a one factor stochastic model. 

\medskip
We now assume that the vectors of short rate dynamics of the green bonds are given by
\begin{align}\label{eq_stochastic_rates}
    \d r_t^{g} := a^{g}(t,r_t^g) \mathrm{d}t + \text{diag}(b^{g}) \d W_t^{g,r},
\end{align}
where $b^g\in \mathbb{R}^{d^g}_+$, $a^{g}:[0,T]\times \mathbb{R}^{d^g}\longrightarrow \mathbb{R}^{d^g}$ and $W^{g,r}$ is a $d^g$-dimensional Brownian motion of correlation matrix $\Sigma^{g,r}$. 
\begin{remark}
For notational simplicity, we assume no dependence between the risk sources of the short-term rates and the ones of the bonds. Allowing such dependence is straightforward and does not lead to a higher dimension of the control problem. 
\end{remark}
We contract only on the portfolio process, the risk factors of the green bonds and of the stochastic short-term rate of the green bonds, and the risk factor of the index of conventional bonds. The new sets of state variables are 
\begin{align*}
    B^{\text{\rm obs},S} =
    \begin{pmatrix}
    X \\
    W^g\\
    r^g \\
    W^I
    \end{pmatrix},
    \; B^{\text{\sout{\rm obs}},S} = W^c,
\end{align*}
where the superscript $S$ stands for stochastic, which can be written as
\begin{align*}
   & \d B_t^{\text{\rm obs},S} := \mu^{\text{\rm obs},S}(t,\pi_t,r_t^g) \mathrm{d}t + \Sigma^{\text{\rm obs},S}(t,\pi_t) \d W_t, \; \d B_t^{\text{\sout{\rm obs}},S} := \mu^{\text{\sout{\rm obs}},S}(t) \mathrm{d}t+ \Sigma^{\text{\sout{\rm obs}},S}(t) \d W_t,
\end{align*}
where for all $t\in[0,T],p=(p^g,p^c,p^I)\in \mathbb{R}^{d^g}\times\mathbb{R}^{d^c}\times\mathbb{R},r^g\in \mathbb{R}^{d^g}$
\begin{align*}
    & W_t := 
    \begin{pmatrix}
    W_t^g \\
    W_t^{g,r} \\
    W_t^I \\
    W_t^c \\
    \end{pmatrix}, \; \mu^{\text{\rm obs},S}(t,p,r^g) := 
    \begin{pmatrix}
        p^g \cdot \big(r^g+\eta^g(t)\circ\sigma^g(t)\big) + p^c \cdot \big(r^c(t)+\eta^c(t)\circ\sigma^c(t)\big) + p^I \mu^I(t) \\
        \mathbf{0}_{d^g,1}\\
        a^g(t,r^g) \\ 
        0
    \end{pmatrix}, \\
   & \Sigma^{\text{\rm obs},S}(t,p) := 
    \begin{pmatrix}
        (p^g\circ\sigma(t)^g)^{\top}  & \mathbf{0}_{1,d^g}  & p^I \sigma^I(t) & (p^c\circ\sigma(t)^c)^{\top} \\
        I_{d^g} & \mathbf{0}_{d^g,d^g} & \mathbf{0}_{d^g,1} & \mathbf{0}_{d^g,d^c}    \\
        \mathbf{0}_{d^g,d^g} & \text{diag}(b^g) & \mathbf{0}_{d^g,1} & \mathbf{0}_{d^g,d^c}   \\
        \mathbf{0}_{1,d^g} & \mathbf{0}_{1,d^g} & 1 & \mathbf{0}_{1,d^c} 
    \end{pmatrix}, \\
    & \mu^{\text{\sout{\rm obs}},S}(t) = 
    \begin{pmatrix}
    \mathbf{0}_{d^c,1}
    \end{pmatrix}, \; \Sigma^{\text{\sout{\rm obs}},S}(t) = 
    \begin{pmatrix}
        \mathbf{0}_{d^c,d^g} & I_{d^c}  & \mathbf{0}_{d^c,1}& \mathbf{0}_{d^c,d^g} \\
    \end{pmatrix}. 
\end{align*}
We now specify the new set of admissible contracts that we consider for the incentives proposed by the government. 

\subsection{Representation of admissible contracts}

Define $\mathcal{C}^S$ as the set of admissible contracts in the case of stochastic short-term rates (the admissibility conditions are the same as for the set $\mathcal{C}$) and for any $\pi\in\mathcal{A}$ we introduce the following quantities
\begin{align*}
    B^S:=
    \begin{pmatrix}
    B^{\text{\rm obs},S} \\
    B^{\text{\sout{\rm obs}},S}
    \end{pmatrix}, \; \mu^S(t,\pi) := \begin{pmatrix}
    \mu^{\text{\rm obs},S}(t,\pi) \\
    \mu^{\text{\sout{\rm obs}},S}
    \end{pmatrix}, \; 
     \Sigma^S(t,\pi) := \begin{pmatrix}
    \Sigma^{\text{\rm obs},S}(t,\pi) \\
    \Sigma^{\text{\sout{\rm obs}},S}
    \end{pmatrix}.
\end{align*}
We define $h^S:[0,T]\times \mathbb{R}^{2d^g+d^c+2}\times \mathbb{S}_{2d^g+d^c+2}(\mathbb{R})\times \mathbb{R}^{d^g}\times K\longrightarrow \mathbb{R}$ such that 
\begin{align*}
    h^S(t,z,g,r^g,p) = -k(p) + z \cdot \mu^S(t,p,r^g) +  \frac{1}{2}\mathrm{Tr}\Big[ g\Sigma^S(t,p)\Sigma (\Sigma^S(t,p))^{\top}  \Big], 
\end{align*}
and for all $(t,z,g,r^g)\in [0,T]\times \mathbb{R}^{2d^g+d^c+2}\times\mathbb{S}_{2d^g+d^c+2}(\mathbb{R})$ we define 
\begin{align*}
    \mathcal{O}^S(t,z,g,r^g):= \Big\{\hat p\in K: \hat p \in \underset{p\in K}{\text{argmax }}h^S(t,z,g,r^g,p)\Big\},
\end{align*}
as the set of maximisers of $h^S$ with respect to its last variable for $(t,z,g,r^g)$ fixed. Following \citeauthor{schal1974selection} \cite{schal1974selection}, there exists at least one Borel-measurable map $\hat\pi:[0,T]\times \mathbb{R}^{2d^g+d^c+2}\times\mathbb{S}_{2d^g+d^c+2}(\mathbb{R})\times\mathbb{R}^{d^g}\longrightarrow K$ such that for every $(t,z,g,r^g)\in [0,T]\times \mathbb{R}^{2d^g+d^c+2}\times\mathbb{S}_{2d^g+d^c+2}(\mathbb{R})\times\mathbb{R}^{d^g}$, $\hat\pi(t,z,g,r^g)\in \mathcal{O}^S(t,z,g,r^g)$. We denote by $\mathcal{O}^S$ the set of all such maps. By analogy with \Cref{thm_1_admissible_contract}, the following theorem states the form of any admissible contracts in this setting.
\begin{theorem}
Without reducing the utility of the Principal, we can restrict the study of admissible contracts to the set $\mathcal{C}^S_1$  where any $\xi\in\mathcal{C}^S_1$ is of the form $\xi=Y_T^{y_0,Z^S,\Gamma^S,\hat\pi}$ where for all $t\in[0,T]$,
\begin{align}\label{opt_contract_sto_1}
    Y_t^{y_0,Z^S,\Gamma^S,\hat\pi} :=  y_0 + \int_0^t Z^S_s\cdot \d B_s + \frac{1}{2}\mathrm{Tr}\Big[\big(\Gamma^S_s + \gamma Z^S_s (Z_s^S)^{\top}\big)\d\langle B^S\rangle_s\Big] - h^S\big(s,Z^S_s,\Gamma^S_s,r_s^g,\hat\pi(s,Z_s,\Gamma_s,r_s^g)\big)\mathrm{d}s,
\end{align}
where $\hat\pi\in\mathcal{O}^S$ and $(Z^S_t)_{t\in[0,T]}, (\Gamma_t^S)_{t\in[0,T]}$ are respectively $\mathbb{R}^{2d^g+2+d^c}$ and $\mathbb{S}_{2d^g+2+d^c}(\mathbb{R})$-valued, $\F$-predictable processes satisfying similar conditions as the elements of $\mathcal{ZG}$. We denote the set of admissible incentives as $\mathcal{ZG}^S$. Moreover in the present case of stochastic rates for green bonds
\begin{align*}
    V^A(Y_T^{y_0,Z^S,\Gamma^S,\hat\pi})=U_A(y_0),\; 
    \mathcal{A}\big(Y_T^{y_0,Z^S,\Gamma^S,\hat \pi}\big)= \Big\{\big(\hat\pi(t,Z^S_t,\Gamma^S_t,r_t^g)\big)_{t\in[0,T]}, \hat{\pi}\in\mathcal{O}^S, (Z^S_t,\Gamma^S_t)_{t\in[0,T]}\in\mathcal{ZG}^S\Big\}.
\end{align*}

\end{theorem}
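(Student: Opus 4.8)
The plan is to follow the proof of \Cref{thm_1_admissible_contract} essentially line by line, working in the enlarged state space that now contains the short-rate process $r^g$ as an additional, volatility-uncontrolled component. Exactly as in \Cref{sec_weak_formulation}, one first recasts the dynamics of $B^S$ on the canonical space in the form
\begin{align*}
\d B^S_t = \sigma^S(t,\pi_t)\big(\lambda^S(t,r^g_t)\,\d t + \d W_t\big),
\end{align*}
where $\sigma^S$ collects the volatility coefficients appearing in $\Sigma^{\text{\rm obs},S}$ and $\Sigma^{\text{\sout{\rm obs}},S}$, and $\lambda^S(t,r^g)$ is the associated market-price-of-risk vector, whose first block is now $(r^g+\eta^g(t)\circ\sigma^g(t))/\sigma^g(t)$ and therefore, unlike in the deterministic-rate model, depends on the state. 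The process $r^g$ solves the autonomous SDE \eqref{eq_stochastic_rates}; assuming $a^g$ Lipschitz with linear growth this equation has a unique strong solution, and since the controls in $\mathcal{A}$ are strictly positive and $b^g,\sigma^g,\sigma^I$ are nonzero, the volatility acting on $(W^g,W^{g,r},W^I,W^c)$ remains invertible, so the pathwise construction of $\mathbb{P}^\pi$ from \Cref{sec_weak_formulation} (via \cite{bichteler1981stochastic,soner2013dual}) carries over verbatim, and Girsanov's theorem introduces the drift.

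Next one performs the same exponential change of variables as in the proof of \Cref{thm_1_admissible_contract}: one introduces the certainty-equivalent value process together with the Hamiltonians $h^{e,S}$ and $H^{e,S}(t,z,g,r^g,y):=\sup_{p\in K}h^{e,S}(t,z,g,r^g,y,p)$. Compactness of $K$ and continuity of $h^{e,S}$ in $p$ yield, through \citeauthor{schal1974selection} \cite{schal1974selection}, the measurable maximisers $\hat\pi\in\mathcal{O}^S$, and the set $\mathcal{ZG}^S$ of admissible incentives is non-empty since it contains $(\mathbf{0},\mathbf{0})$. One is then in the setting of \citeauthor*{cvitanic2018dynamic} \cite{cvitanic2018dynamic}: by \cite[Proposition 3.3 and Theorem 3.6]{cvitanic2018dynamic}, without reducing the Principal's utility every admissible contract satisfies $U_A(\xi)=Y^{e,S}_T$ for some admissible pair $(Z^e,\Gamma^e)$, and undoing the transformation by It\=o's formula — equivalently, setting $Z^S_t:=-Z^{e}_t/(\gamma Y^{e,S}_t)$ and $\Gamma^S_t:=-\Gamma^{e}_t/(\gamma Y^{e,S}_t)$ — produces precisely the representation \eqref{opt_contract_sto_1}, in which the additional term $\gamma Z^S_s(Z^S_s)^{\top}$ inside the quadratic-variation integral records the curvature of the exponential utility. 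The identities $V^A(Y_T^{y_0,Z^S,\Gamma^S,\hat\pi})=U_A(y_0)$ and the description of $\mathcal{A}(Y_T^{y_0,Z^S,\Gamma^S,\hat\pi})$ then follow from \cite[Proposition 3.3]{cvitanic2018dynamic}, exactly as in \Cref{thm_1_admissible_contract}.

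The one genuinely new point, and the main obstacle, is integrability. In contrast with the deterministic-rate model, the drift $\mu^{\text{\rm obs},S}$ contains the term $p^g\cdot r^g$ and the Girsanov kernel $\lambda^S$ contains $r^g/\sigma^g$, both of which are unbounded in $r^g$; hence \Cref{assumption_bounded} no longer suffices on its own. One must therefore add, and then use, moment and integrability hypotheses on $r^g$ — a Beneš- or Novikov-type condition, which holds automatically whenever $a^g$ is affine, as in the Ornstein--Uhlenbeck specification \eqref{taux_sto_ex}, since then $r^g$ is Gaussian with all exponential moments — in order to guarantee that the measure change introducing the drift is a bona fide probability measure, and that the norms $\|\cdot\|_{\mathbb{H}^p}$ and $\|\cdot\|_{\mathbb{D}^p}$ entering the definition of $\mathcal{ZG}^S$ are finite for the relevant processes, so that the representation results of \cite{cvitanic2018dynamic} apply. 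Once this is in place, the remainder of the argument is a routine transcription of \Cref{sec_weak_formulation} and of the proof of \Cref{thm_1_admissible_contract}.
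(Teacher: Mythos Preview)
Your proposal is correct and follows precisely the route the paper intends: the paper does not give a separate proof of this theorem but states it ``by analogy with \Cref{thm_1_admissible_contract}'', whose proof in \Cref{sec_weak_formulation} and Appendix~B is exactly the argument you transcribe (canonical-space construction, Girsanov, exponential Hamiltonian, \cite[Proposition~3.3 and Theorem~3.6]{cvitanic2018dynamic}, then the It\^o change of variables). Your discussion of the integrability issue---that $\lambda^S$ and $\mu^{\text{\rm obs},S}$ now depend on the unbounded state $r^g$, so that \Cref{assumption_bounded} must be supplemented by exponential-moment conditions on $r^g$ (automatic for the Ornstein--Uhlenbeck specification \eqref{taux_sto_ex})---is a genuine point the paper leaves implicit, and it is the right technical caveat to flag.
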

We now set
\begin{align*}
    Z^S_t = 
    \begin{pmatrix}
    Z_t^{\text{\rm obs},S}\\
    Z_t^{\text{\sout{\rm obs}},S}
    \end{pmatrix}, \;
    \Gamma_t^S = 
    \begin{pmatrix}
    \Gamma_t^{\text{\rm obs},S}& \Gamma_t^{\text{\rm obs},\text{\sout{\rm obs}},S}\\
    \Gamma_t^{\text{\rm obs},\text{\sout{\rm obs}},S} & \Gamma_t^{\text{\sout{\rm obs}},S}
    \end{pmatrix},
\end{align*}
where for all $t\in[0,T]$
\begin{align*}
    & Z_t^{\text{\rm obs},S}\in \mathbb{R}^{2d^g+2}, Z_t^{\text{\sout{\rm obs}},S}\in  \mathbb{R}^{d^c}, \Gamma_t^{\text{\rm obs},S} \in \mathbb{S}_{2d^g+2}(\mathbb{R}), \Gamma_t^{\text{\sout{\rm obs}},S} \in \mathbb{S}_{d^c}(\mathbb{R}), \Gamma_t^{\text{\rm obs},\text{\sout{\rm obs}},S} \in \mathcal{M}_{2d^g+2,d^c}(\mathbb{R}).
\end{align*}
We define $h^{\text{obs},S}:[0,T]\times \mathbb{R}^{2d^g+2}\times \mathbb{S}_{2d^g+2}(\mathbb{R})\times\mathbb{R}^{d^g}\times K\longrightarrow \mathbb{R}$ such that 
\begin{align*}
    h^{\text{obs}}(t,z^{\text{obs},S},g^{\text{obs},S},r^g,p) = -k(p) + z^{\text{\rm obs},S} \cdot \mu^{\text{obs},S}(t,p,r^g) +  \frac{1}{2}\mathrm{Tr}\Big[ g^{\text{\rm obs},S}\Sigma^{\text{\rm obs},S}(t,p)\Sigma (\Sigma^{\text{\rm obs},S}(t,p))^{\top}  \Big], 
\end{align*}
and for all $(t,z^{\text{obs},S},g^{\text{obs},S},r^g)\in [0,T]\times \mathbb{R}^{2d^g+2}\times\mathbb{S}_{2d^g+2}(\mathbb{R})\times\mathbb{R}^{d^g}$ we define
\begin{align*}
    \mathcal{O}^{\text{obs},S}(t,z^{\text{obs}},g^{\text{obs}},r^g) := \Big\{\hat p\in K:  \hat p\in \underset{p\in K}{\text{argmax }}h^{\text{obs},S}(t,z^{\text{obs},S},g^{\text{obs},S},r^g,p)\Big\}.
\end{align*}
Using again \citeauthor{schal1974selection} \cite{schal1974selection}, there exists at least one Borel-measurable map $\hat\pi:[0,T]\times \mathbb{R}^{2d^g+2}\times\mathbb{S}_{2d^g+2}(\mathbb{R})\times\mathbb{R}^{d^g}\longrightarrow~B$ such that for every $(t,z^{\text{obs},S},g^{\text{obs},S},r^g)\in [0,T]\times \mathbb{R}^{2d^g+2}\times\mathbb{S}_{2d^g+2}(\mathbb{R})\times\mathbb{R}^{d^g}$, $\hat\pi(t,z^{\text{obs},S},g^{\text{obs},S},r^g)\in \mathcal{O}^{\text{obs},S}(t,z^{\text{obs},S},g^{\text{obs},S},r^g)$ and $\mathcal{O}^{\text{obs},S}$ denotes the set of all such maps. We consider the subset of admissible contracts
\begin{align*}
\mathcal{C}_2^S:=\Big\{Y_T^{y_0,Z^S,\Gamma^S,\hat\pi}\in \mathcal{C}_1^S:Z^{\text{\sout{\rm obs}},S}=\mathbf{0}_{d^c},\Gamma^{\text{\sout{\rm obs}},S}=\mathbf{0}_{d^c,d^c},\Gamma^{\text{\rm obs},\text{\sout{\rm obs}},S}=\mathbf{0}_{2d^g+2,d^c} \Big\} \subset \mathcal{C}_1^S \subset \mathcal{C}^S,  
\end{align*}
where any contract in $\mathcal{C}_2^S$ is of the form $Y_T^{y_0,Z^{\text{obs},S},\Gamma^{\text{obs},S},\hat\pi}$ where for all $t\in[0,T]$,
\begin{align}\label{opt_contract_sto}
\begin{split}
  Y_t^{y_0,Z^{\text{obs},S},\Gamma^{\text{obs},S},\hat\pi} :=  y_0 + \int_0^t & Z_s^{\text{\rm obs},S}\cdot \d B_s^{\text{\rm obs},S} + \frac{1}{2}\mathrm{Tr}\Big[\big(\Gamma^{\text{\rm obs},S}_s + \gamma Z_s^{\text{\rm obs},S}(Z_s^{\text{\rm obs},S})^{\top}\big)d\langle B^{\text{\rm obs},S}\rangle_s\Big] \\
    & - h^{\text{\rm obs},S}\Big(s,Z_s^{\text{\rm obs}},\Gamma_s^{\text{\rm obs}},r_s^g,\hat\pi\big(s,Z_s^{\text{\rm obs},S},\Gamma_s^{\text{\rm obs},S},r_s^g\big)\Big)\mathrm{d}s,    
\end{split}
\end{align}
where $y_0\geq 0$, $\hat\pi\in\mathcal{O}^{\text{obs},S}$ and $(Z^{\text{obs},S},\Gamma^{\text{obs},S})\in\mathcal{ZG}^{\text{obs},S}$ with 
\begin{align*}
    \mathcal{ZG}^{\text{obs},S}:= \Big\{ &(Z^{\text{obs},S},\Gamma^{\text{obs},S}): \mathbb{R}^{2d^g+2}\times\mathbb{S}_{2d^g+2}(\mathbb{R})\text{-valued, }\mathbb{F}\text{-predictable s.t } Y_T^{y_0,Z^{\text{obs},S},\Gamma^{\text{obs},S},\hat\pi}\in\mathcal{C}_2^S \Big\}.
\end{align*}
We can now formulate the stochastic control problem faced by the government. 

\subsection{The Hamilton-Jacobi-Bellman equation}

Let us define the process $(Q^{y_0,Z^{\text{obs},S},\Gamma^{\text{obs},S},\hat\pi}_t)_{t\in[0,T]}$ where for all $(t,y_0,Z^{\text{obs},S},\Gamma^{\text{obs},S},\hat\pi)\in[0,T]\times\mathbb{R}\times\mathcal{ZG}^{\text{obs},S}\times\mathcal{O}^{\text{obs},S}$
\begin{align*}
    Q_t^{y_0,Z^{\text{obs},S},\Gamma^{\text{obs},S},\hat\pi} & := X_t  - \int_0^t \sum_{i=1}^{d^g}\big(G_i -\hat\pi_i^{g}(s,Z_s^{\text{\rm obs},S},\Gamma_s^{\text{\rm obs},S},r_s^g)\big)^2 ds -Y_t^{y_0,Z^{\text{obs},S},\Gamma^{\text{obs},S},\hat\pi}.
\end{align*}
The optimisation problem of the government that we consider here is 
\begin{align}\label{pb_principal_2_stoc}
\begin{split}
    \widetilde{V}_0^P = \sup_{y_0\geq 0}\sup_{(Z^{\text{obs},S},\Gamma^{\text{obs},S},\hat\pi)\in \mathcal{ZG}^{\text{obs},S}\times\mathcal{O}^{\text{obs},S}}\mathbb{E}^{\hat\pi(Z^{\text{obs},S},\Gamma^{\text{obs},S})}&\Bigg[-\exp\Big(-\nu Q_T^{y_0,Z^{\text{obs},S},\Gamma^{\text{obs},S},\hat\pi}\Big)\Bigg].    
\end{split}
\end{align}
Due to the presence of state variables in the best response of the Agent, the optimal control of the Principal will no longer be deterministic, and we have to rely on the Hamilton-Jacobi-Bellman formulation of the stochastic control problem. First, we note that the supremum over $y_0$ is attained by setting $y_0=0$. Next, the state variables of the control problem are $\big(t,B_t^{\text{obs},S},Q_t^{0,Z^{\text{obs},S},\Gamma^{\text{obs},S},\hat\pi}\big)$ and as it is standard in control problems with CARA utility function, the last variable can be simplified. Define $P^S=\mathbb{R}^{2d^g+2}\times \mathbb{S}_{2d^g+2}(\mathbb{R})$, and the Hamiltonian 
$H^{\hat\pi}:[0,T]\times P^S \times  \mathbb{R}\times P^S \longrightarrow\mathbb{R}$ 
\begin{align*}
    & H^{\hat{\pi}}(t,z,g,u,u_b,u_{bb}) := \nu u  \bigg( z \cdot \mu^{\text{\rm obs},S}\big(t,\hat\pi(t,z,g,r^g),r^g\big) +  \sum_{i=1}^{d^g}\big(G_i -\hat\pi_i^{g}(t,z,g,r^g)\big)^2 \\
    & + \frac{1}{2}\mathrm{Tr}\bigg[(g+\gamma z z^{\top})\Sigma^{\text{\rm obs},S}\big(t,\hat\pi(t,z,g,r^g)\big)\Big(\Sigma^{\text{\rm obs},S}\big(t,\hat\pi(t,z,g,r^g)\big)\Big)^{\top}\bigg] - h^{\text{\rm obs},S}\big(t,z,g,r^g,\hat\pi(t,z,g,r^g)\big)  \bigg) \\
    & + \frac{1}{2} \nu^2 u \mathrm{Tr}\bigg[z z^{\top} \Sigma^{\text{\rm obs},S}\big(t,\hat\pi(t,z,g,r^g)\big)\Big(\Sigma^{\text{\rm obs},S}\big(t,\hat\pi(t,z,g,r^g)\big)\Big)^{\top}\bigg] + u_b \cdot \mu^{\text{\rm obs},S}\big(t,\hat\pi(t,z,g,r^g),r^g\big) \\
    & + \frac{1}{2}\mathrm{Tr}\bigg[ \Sigma^{\text{\rm obs},S}\big(t,\hat\pi(t,z,g,r^g)\big)\Big(\Sigma^{\text{\rm obs},S}\big(t,\hat\pi(t,z,g,r^g)\big)\Big)^{\top} u_{bb} \bigg].
\end{align*}
The value function of the control problem of the Principal is solution of the following Hamilton-Jacobi-Bellman equation 
\begin{align}\label{hjb_stoc_rates}
\begin{split}
\begin{cases}
     \partial_t U(t,b) + \underset{(z,g,\hat\pi)\in P^S \times \mathcal{O}^{\text{obs},S}}{\sup } H^{\hat{\pi}}\big(t,z,g,b,U,U_b,U_{bb}\big) =0,\\
     U(T,b)= -1,
\end{cases}
\end{split}
\end{align}
where $U:[0,T]\times\mathbb{R}^{2d^g+2}\longrightarrow\mathbb{R}$ and for all $(i,j)\in \{1,\dots,2d^g+2\}$, $(U_b)_i = \partial_{b_i} U, (U_{bb})_{i,j}=\partial_{b_i b_j}U$, in the sense that $\widetilde V^P_0=U(0,b_0)$ where $B_0^{\text{obs},S}=b_0$ and $y_0=0$. Thus, the incentives provided to the investor are obtained up to the resolution of a $(2d^g + 2)$-dimensional HJB equation. Although it provides greater flexibility on the modelling of short-term rates, this approach can only be applied to a small portfolio of bonds using classic numerical schemes on sparse grids. 

\end{appendix}
\bibliography{biblio.bib}

\end{document}